\numberwithin{equation}{section}
\newtheorem{lemma}{Lemma}
\newtheorem{theorem}{Theorem}
\theoremstyle{definition}
\theoremstyle{remark}
\newtheorem{remark}{Remark}
\newcommand{\beq}{\begin{eqnarray}}
\newcommand{\eeq}{\end{eqnarray}}
\newcommand{\beqnn}{\begin{eqnarray*}}
\newcommand{\eeqnn}{\end{eqnarray*}}
\newcommand{\rd}{\partial}
\newcommand{\diag}{\operatorname{diag}}
\newcommand{\tp}[1]{\:{}^{\mathrm{t}}#1}
\newcommand{\CC}{\mathbb{C}}
\newcommand{\PP}{\mathbb{P}}
\newcommand{\ZZ}{\mathbb{Z}}
\newcommand{\bst}{\boldsymbol{t}}
\newcommand{\bsx}{\boldsymbol{x}}
\newcommand{\bszero}{\boldsymbol{0}}
\newcommand{\calN}{\mathcal{N}}
\newcommand{\calO}{\mathcal{O}}
\newcommand{\calP}{\mathcal{P}}
\newcommand{\frakL}{\mathfrak{L}}
\begin{document}

\title{Modified melting crystal model and Ablowitz-Ladik hierarchy}
\author{Kanehisa Takasaki\thanks{takasaki@math.h.kyoto-u.ac.jp}\\
{\normalsize Graduate School of Human and Environmental Studies}\\ 
{\normalsize Kyoto University}\\
{\normalsize Sakyo, Kyoto 606-8501, Japan}}
\date{}
\maketitle 

\begin{abstract}
This paper addresses the issue of integrable structure 
in a modified melting crystal model of topological string theory 
on the resolved conifold.  The partition function can be expressed 
as the vacuum expectation value of an operator on the Fock space 
of 2D complex free fermion fields.  The quantum torus algebra 
of fermion bilinears behind this expression is shown to have 
an extended set of ``shift symmetries''.  They are used to prove 
that the partition function (deformed by external potentials) 
is essentially a tau function of the 2D Toda hierarchy.  
This special solution of the 2D Toda hierarchy can be characterized 
by a factorization problem of $\ZZ\times\ZZ$ matrices as well.  
The associated Lax operators turn out to be quotients of 
first order difference operators.  This implies that the solution 
of the 2D Toda hierarchy in question is actually a solution of 
the Ablowitz-Ladik (equivalently, relativistic Toda) hierarchy.  
As a byproduct, the shift symmetries are shown to be related 
to matrix-valued quantum dilogarithmic functions.  
\end{abstract}

\begin{flushleft}
2010 Mathematics Subject Classification: 
17B65, 35Q55, 81T30, 82B20\\
Key words: melting crystal, topological string, 
resolved conifold, free fermion, quantum torus, 
shift symmetry, quantum dilogarithm, Toda hierarchy, 
Ablowitz-Ladik hierarchy
\end{flushleft}

%%% \newpage 

\section{Introduction}

This is a sequel of our previous work \cite{NT07,NT08} 
on the integrable structure of the ``melting crystal model''  
of topological string theory \cite{ORV03} 
and 5D $\calN= 1$ supersymmetric $U(1)$ gauge theory \cite{MNTT04}.  
We here address the same issue for a modified melting crystal model 
that is related to topological string theory on the resolved conifold. 
In particular, we present an affirmative answer to a conjecture 
in our preliminary study on this issue \cite{Takasaki12}. 
In the perspectives of topological string theory, 
the partition functions of these two models 
are the simplest non-trivial cases 
of the topological vertex construction \cite{AKMV03} 
of string amplitudes on non-compact Calabi-Yau 3-folds.  
The Calabi-Yau 3-folds relevant to our models 
are the first two members of the so called 
``local $\CC\PP^1$ geometries'' $X_l 
= \calO(-l)\oplus\calO(l-2) \to \CC\PP^1$, $l = 0,1,\ldots$. 
The resolved conifold amounts to the case of $l = 1$.  
Local Gromov-Witten invariants of these manifolds 
are studied by the localization technique \cite{BP08}.  
Generating functions of these topological invariants 
coincide with the topological string amplitudes 
obtained by the method of topological vertex.  
Moreover, the genus-zero part of those invariants 
is determine by the random matrix technique \cite{CGMPS06}. 

The partition function $Z$ of the previous model, 
see  (\ref{Z-def}), is a sum of weights labelled by 
integer partitions of arbitrary lengths.  
This statistical sum is derived from the partition function 
of random plane partitions (3D Young diagrams) 
by the method of diagonal slicing \cite{OR01}.  
The aforementioned integrable structure \cite{NT07,NT08} 
emerges when $Z$ is deformed by external potentials.  
The partition function of the deformed model 
is a function $Z(s,\bst)$ of a discrete variable $s$ 
and a set of continuous variables $\bst = (t_1,t_2,\ldots)$.  
$s$ is a parameter of the external potentials 
and $t_k$'s are coupling constants.  
The deformed partition function $Z(s,\bst)$ 
turns out to be related to a tau function 
$\tau(s,\bst)$ of the 1D reduction 
of the 2D Toda hierarchy \cite{UT84,TT95}.  
$s$ and $\bst$ play the role of a lattice coordinate 
and time variables in the 1D Toda hierarchy.  
The relation to the Toda hierarchy is explained 
with the aid of 2D complex free fermion fields 
that are extensively used for integrable hierarchies 
as well \cite{MJD-book,Takebe91}.  $Z(s,\bst)$ can be 
thereby expressed as the vacuum expectation value 
of an operator in the fermion Fock space. 
We found that this fermionic expression of $Z(s,\bst)$ 
can be converted to the tau function $\tau(s,\bst)$ 
by a change of variables and a simple multiplicative factor.  
A technical clue therein is the notion of 
``shift symmetries'' \cite{NT07,NT08} 
generated by the adjoint action of 
special ``vertex operators'' \cite{OR01} 
on the quantum torus algebra of fermion bilinears.  

Our approach to the partition function $Z'$ 
of the modified model, see (\ref{Z'-def}), 
is mostly parallel, but has a new feature as well.  
Namely, $Z'$ is deformed by two sets, 
rather than a single set, of external potentials.  
The deformed partition function is a function 
$Z'(s,\bst,\bar{\bst})$ of a discrete parameter $s$ 
and coupling constants $\bst = (t_1,t_2,\ldots)$ 
and $\bar{\bst} = (\bar{t}_1,\bar{t}_2,\ldots)$ 
\footnote{The bars ``$\bar{\quad}$'' 
do not mean complex conjugation.} 
of the external potentials.  
We introduce an extended set of shift symmetries 
(related to vertex operators of another type \cite{BY08}), 
and show that the deformed partition function 
$Z'(s,\bst,\bar{\bst})$ is related to a tau function 
$\tau'(s,\bst,\bar{\bst})$ of the 2D Toda hierarchy 
in the same way as $Z(s,\bst)$ is related to $\tau(s,\bst)$.  
It is natural to expect that this tau function, too, 
is a solution of a reduced system of the 2D Toda hierarchy.  
This is indeed the case.  As we conjectured 
in the preliminary study \cite{Takasaki12}, 
we here prove that $\tau'(s,\bst,\bar{\bst})$ is a solution 
of the Ablowitz-Ladik hierarchy \cite{AL75} or, 
speaking more fairly, the relativistic Toda hierarchy 
\cite{Ruijsenaars90}.  We expect that this explains an origin 
of the integrable structure that Brini \cite{Brini10} observed 
in the generating function of local Gromov-Witten invariants 
of the resolved conifold by a genus-by-genus analysis. 

The Ablowitz-Ladik hierarchy is a spatial discretization 
of the nonlinear Schr\"odinger hierarchy.  
A traditional Lax formalism is based on a $2 \times 2$ 
matrix-valued zero-curvature equations \cite{AL75}.  
A bilinear formalism consists of Hirota equations 
for three tau functions $\tau,\sigma$ and $\bar{\sigma}$ 
\cite{Vekslerchik97}.  In the context of random matrices, 
a semi-infinite lattice version emerges in the unitary matrix model 
and the associated biorthogonal Laurent polynomials 
\cite{AvM99,Cafasso08}.  Moreover, as pointed out 
by Kharchev et al. \cite{KMZ96} and Suris \cite{Suris97}, 
the Ablowitz-Ladik hierarchy is equivalent 
to the relativistic Toda hierarchy.  
It is rather the relativistic Toda hierarchy 
that is directly connected with the 2D Toda hierarchy
\footnote{Bruschi and Ragnisco \cite{BR89} proposed 
a Lax formalism of the relativistic Toda lattice 
that is based on a scalar-valued auxiliary linear problem.  
This auxiliary linear problem is a variant 
of the well known auxiliary linear problem 
of the usual Toda lattice, and fits into the Lax formalism 
of the 2D Toda hierarchy.}.  Brini et al. \cite{BCR11} 
characterized this integrable hierarchy 
(which they call the Ablowitz-Ladik hierarchy) 
as a kind of ``rational reduction'' of the 2D Toda hierarchy.  
The notion of rational reduction was first introduced 
in the case of the KP hierarchy \cite{Krichever94}.  
In the Toda version, the Lax operators $L$ and $\bar{L}$ 
of the reduced system are expressed as ``quotients'' 
of finite order difference operators in the variable $s$.    

We show that the Lax operators of the solution in question 
do have such a factorized form.  To this end, we consider 
a factorization problem of $\ZZ\times\ZZ$ matrices 
\cite{Takasaki84,NTT95,Takasaki95} 
that can capture all solutions of the 2D Toda hierarchy.  
In general, solving this factorization problem directly 
is extremely difficult.  In the present case, however, 
we can find an explicit form of the solution 
at the ``initial time'' $\bst = \bar{\bst} = \bszero$.  
This is enough to determine the ``initial values'' 
of the Lax operators.  As expected, the initial values 
turn out to have the factorized form of Brini et al.  
Since the factorized form is preserved by the flows 
of the 2D Toda hierarchy (this is what the notion 
of reduction means), we can conclude that 
the solution in question belongs to the Ablowitz-Ladik 
or relativistic Toda hierarchy.  

Let us emphasize that shift symmetries 
are interesting in themselves.  
This is another subject of this paper.  
We shall encounter a few new aspects in the course 
of the consideration on the integrable structure.  
Firstly, we derive the shift symmetries more carefully 
than in our previous work \cite{NT07,NT08}.  
This explains why a careless use of these symmetries 
leads to a contradictory result \cite{Takasaki12}.  
Secondly, we translate the shift symmetries 
to the language of $\ZZ\times\ZZ$ matrices.  
Remarkably, this reveals a close relationship 
with quantum dilogarithmic functions \cite{FV93,FK93}.  
In the matrix representation, the vertex operators 
\cite{OR01,BY08} in the fermionic expression 
of $Z(s,\bst)$ and $Z'(s,\bst,\bar{\bst})$ 
turn into matrix-valued quantum dilogarithmic functions. 
The shift symmetries thereby become a straightforward 
consequence of an infinite product formula 
of the quantum dilogarithmic functions.  
We believe that this issue deserve to be studied 
in a more genera context.   

This paper is organized as follows.  
Sections 2, 3, and 4 presents 
the perspective from fermions and tau functions. 
In Section 2, the partition functions of 
the two melting crystal models are formulated.  
The setup of the 2D complex free fermion system 
is also explained here.  
In Section 3, the quantum torus algebra 
of fermion bilinears is introduced, 
and the extended set of shift symmetries are derived.  
These algebraic relations are used in Section 4 
to derive the tau functions 
from the deformed partition functions 
of the melting crystal models.  
Sections 5 and 6 are mostly concerned with 
infinite matrices and Lax equations.  
In Section 5, the shift symmetries are reformulated 
in the language of infinite matrices.  
This is a place where matrix-valued 
quantum dilogarithmic functions show up.  
Although the matrix versions of shift symmetries 
are not necessary for the subsequent consideration, 
the proof of these algebraic relations overlaps 
with technical details of the contents of Section 6.  
Section 6 is devoted to the proof of 
our conjecture that the modified melting crystal model 
gives a solution of the Ablowitz-Ladik hierarchy.  
The relevant matrix factorization problem 
is formulated and solved here. 
The associated Lax operators are thereby shown 
to take a factorized form as expected.

\section{Melting crystal models}

\subsection{Previous model}

The partition function of the undeformed model is the sum 
\beq
  Z = \sum_{\lambda\in\calP}s_\lambda(q^{-\rho})^2Q^{|\lambda|}, 
  \quad |\lambda| = \sum_{i\ge 1} \lambda_i, 
\label{Z-def}
\eeq
over the set $\calP$ of all partitions 
$\lambda = (\lambda_1,\lambda_2,\ldots)$. 
$q$ and $Q$ are complex numbers with $|q|,|Q|<1$ 
(or may be thought of as formal variables).  
$s_\lambda(q^{-\rho})$ is the special value 
of the Schur function $s_\lambda(\bsx)$  of infinite variables 
$\bsx = (x_1,x_2,\ldots)$ at 
\beqnn
  q^{-\rho} = (q^{1/2},q^{3/2},\ldots,q^{n-1/2},\dots).
\eeqnn
This special value has the well known hook formula 
\cite{Macdonald-book}: 
\beq
  s_\lambda(q^{-\rho}) 
  = \frac{q^{-\kappa(\lambda)/4}}
     {\prod_{(i,j)\in\lambda}(q^{-h(i,j)/2}-q^{h(i,j)/2})}, 
\label{hook-formula}
\eeq
where $\kappa(\lambda)$ are the second Casimir value 
(equivalently, twice the total content) 
\beqnn
  \kappa(\lambda) 
  = \sum_{i=1}^\infty\lambda_i(\lambda_i-2i+1) 
  = 2\sum_{(i,j)\in\lambda}(j-i) 
\eeqnn
and $h(i,j)$ is the hook length of the cell $(i,j)$ 
in the Young diagram of shape $\lambda$.  
By the Cauchy identity, 
\beq
  \sum_{\lambda\in\calP}
  s_\lambda(x_1,x_2,\ldots)s_\lambda(y_1,y_2,\ldots)
  = \prod_{i,j\ge 1}(1 - x_iy_j)^{-1}, 
\eeq
one can rewrite the sum $Z$ to an infinite product: 
\beq
  Z = \prod_{i,j=1}^\infty (1 - Qq^{i+j-1})^{-1} 
    = \prod_{n=1}^\infty (1 - Qq^n)^{-n}. 
\eeq

We deform (\ref{Z-def}) with a discrete variable $s\in\ZZ$ 
and an infinite number of continuous variables 
$\bst = (t_1,t_2,\ldots)$ as 
\beq
  Z(s,\bst) = \sum_{\lambda\in\calP}s_\lambda(q^{-\rho})^2
              Q^{|\lambda|+s(s+1)/2}e^{\Phi(\lambda,s,\bst)}, 
\label{Z(s,t)-def}
\eeq
where $\Phi(\lambda,s,\bst)$ is a linear combination 
\beqnn
  \Phi(\lambda,s,\bst) = \sum_{k=1}^\infty t_k\Phi_k(\lambda,s) 
\eeqnn
of the external potentials 
\begin{align}
  \Phi_k(\lambda,s) 
  &= \sum_{i=1}^\infty q^{k(\lambda_i+s-i+1)} - \sum_{i=1}^\infty q^{k(-i+1)} 
     \quad\text{(formal expression)} \nonumber\\
  &= \sum_{i=1}^\infty(q^{k(\lambda_i+s-i+1)} - q^{k(s-i+1)}) 
   + \frac{1-q^{ks}}{1-q^k}q^k. 
\label{Phi-def}
\end{align}
The formal expression in the first line is related 
to the normal ordering prescription 
in the fermionic expression (\ref{L0-Hk-fermion}) 
of these potentials.

\subsection{Modified model}

In the modified model, one of the factors of the weight 
is modified as $s_\lambda(q^{-\rho}) \to s_{\tp{\lambda}}(q^{-\rho})$, 
where $\tp{\lambda}$ is the conjugate (or transposed) partition 
of $\lambda$.  Thus the undeformed partition function reads 
\beq
  Z' = \sum_{\lambda\in\calP}
       s_\lambda(q^{-\rho})s_{\tp{\lambda}}(q^{-\rho})Q^{|\lambda|}. 
\label{Z'-def}
\eeq
By the dual Cauchy identity 
\beq
  \sum_{\lambda\in\calP}
  s_\lambda(x_1,x_2,\ldots)s_{\tp{\lambda}}(y_1,y_2,\ldots) 
  = \prod_{i,j=1}^\infty (1 + x_iy_j), 
\eeq
$Z'$ can be cast into an infinite product: 
\beq
  Z' = \prod_{i,j=1}^\infty (1 + Qq^{i+j-1}) 
     = \prod_{n=1}^\infty (1 + Qq^n)^n. 
\eeq

Since $s_{\tp{\lambda}}(q^{-\rho})$ and $s_{\lambda}(q^{-\rho})$ 
are related as 
\beqnn
  s_{\tp{\lambda}}(q^{-\rho}) 
  = q^{\kappa(\lambda)/2}s_\lambda(q^{-\rho}). 
\eeqnn
by the hook formula (\ref{hook-formula}), 
one can rewrite the sum (\ref{Z'-def}) as 
\beq
  Z' = \sum_{\lambda\in\calP}s_\lambda(q^{-\rho})^2
         q^{\kappa(\lambda)/2}Q^{|\lambda|}.
\eeq
Thus $Z$ and $Z'$ are the first two members of 
the infinite family  
\beq
  Z_l = \sum_{\lambda\in\calP}s_\lambda(q^{-\rho})^2
        q^{l\kappa(\lambda)/2}Q^{|\lambda|}, \quad 
  l = 0,1,\ldots, 
\eeq
of statistical models.  $Z_l$ is the generating function 
of Gromov-Witten invariants in the local $\CC\PP^1$ geometry 
$X_l = \calO(-l)\oplus\calO(l-2) \to \CC\PP^1$ \cite{BP08}. 
$Z$ and $Z'$ can also be derived by the method of 
topological vertex \cite{AKMV03} as the amplitudes 
of topological string theory on $X_0$ and $X_1$.  
Note, however, that the parameter $Q$ in the string amplitude 
of $X_1$ is flipped to $-Q$ \cite{CGMPS06}.  
This difference affects the final result 
(see Section 6) on the underlying integrable structure.  

We now introduce two sets of continuous variables 
$\bst = (t_1,t_2,\ldots)$ and 
$\bar{\bst} = (\bar{t}_1,\bar{t}_2,\ldots)$, 
and deform the sum (\ref{Z'-def}) as 
\beq
  Z'(s,\bst,\bar{\bst}) 
  = \sum_{\lambda\in\calP}
    s_\lambda(q^{-\rho})s_{\tp{\lambda}}(q^{-\rho}) 
    Q^{|\lambda|+s(s+1)/2}e^{\Phi(\lambda,s,\bst,\bar{\bst})}, 
\label{Z'(s,t,tbar)-def}
\eeq
where $\Phi(\lambda,s,\bst,\bar{\bst})$ is the linear combination 
\beqnn
  \Phi(\lambda,s,\bst,\bar{\bst}) 
  = \sum_{k=1}^\infty t_k\Phi_k(\lambda,s) 
    + \sum_{k=1}^\infty\bar{t}_k\Phi_{-k}(\lambda,s). 
\eeqnn
$\Phi_{-k}(s,\lambda)$'s are defined by the same formula 
as (\ref{Phi-def}) with $k$ replaced by $-k$.

\subsection{Fermionic formulation}

We use the same formulation of 2D complex free fermions 
as in the previous work \cite{NT07,NT08}.  
The Fourier modes $\psi_n,\psi^*_n$, $n \in \ZZ$ 
of the fermion fields 
\beqnn
  \psi(z) = \sum_{n\in\ZZ}\psi_nz^{-n-1},\quad 
  \psi^*(z) = \sum_{n\in\ZZ}\psi^*_nz^{-n}. 
\eeqnn
satisfy the anti-commutation relations 
\beqnn
  \psi_m\psi^*_n + \psi^*_n\psi_m = \delta_{m+n,0}, \quad
  \psi_m\psi_n + \psi_n\psi_m = 0, \quad 
   \psi^*_m\psi^*_n + \psi^*_n\psi^*_m = 0. 
\eeqnn
The Fock spaces are decomposed to charge-$s$ sectors 
($s \in \ZZ$) with the ground states 
\beqnn
  \langle s| = \langle -\infty|\cdots\psi^*_{s-1}\psi^*_s,\quad 
  |s\rangle = \psi_{-s}\psi_{-s+1}\cdots|-\infty\rangle 
\eeqnn
and the excited states 
\beqnn
  \langle s,\lambda| 
  = \langle -\infty|\cdots\psi^*_{\lambda_2+s-1}\psi^*_{\lambda_1+s},\quad 
  |s,\lambda\rangle 
  = \psi_{-\lambda_1-s}\psi_{-\lambda_2-s+1}\cdots|-\infty\rangle 
\eeqnn
labelled by partitions.  

Let us introduce the fermion bilinears 
\begin{gather*}
  L_0 = \sum_{n\in\ZZ}n{:}\psi_{-n}\psi^*_n{:},\quad
  W_0 = \sum_{n\in\ZZ}n^2{:}\psi_{-n}\psi^*_n{:},\\
  H_k = \sum_{n\in\ZZ}q^{kn}{:}\psi_{-n}\psi^*_n{:},\quad
  J_k = \sum_{n\in\ZZ}{:}\psi_{-n}\psi^*_{n+k}{:}
\end{gather*}
and the vertex operators \cite{OR01,BY08} 
\beqnn
  \Gamma_{\pm}(z) 
  = \exp\left(\sum_{k=1}^\infty\frac{z^k}{k}J_{\pm k}\right),\quad 
  \Gamma'_{\pm}(z) 
  = \exp\left(- \sum_{k=1}^\infty\frac{(-z)^k}{k}J_{\pm k}\right). 
\eeqnn
It is convenient to extend the definition of the vertex operators 
to multi-variables $\bsx = (x_1,x_2,\ldots)$: 
\beqnn
  \Gamma_{\pm}(\bsx) = \prod_{i\ge 1}\Gamma_{\pm}(x_i),\quad 
  \Gamma'_{\pm}(\bsx) = \prod_{i\ge 1}\Gamma'_{\pm}(x_i). 
\eeqnn
The matrix elements of these multi-variable vertex operators 
with respect to the states $\langle\lambda,r|$, $|\mu,s\rangle$ 
are skew Schur functions \cite{MJD-book}: 
\beq
\begin{gathered}
  \langle\lambda,r|\Gamma_{-}(\bsx)|\mu,s\rangle
  = \langle\mu,s|\Gamma_{+}(\bsx)|\lambda,r\rangle
  = \delta_{rs}s_{\lambda/\mu}(\bsx),\\
  \langle\lambda,r|\Gamma'_{-}(\bsx)|\mu,s\rangle 
  = \langle\mu,s|\Gamma'_{+}(\bsx)|\lambda,r\rangle
  = \delta_{rs}s_{\tp{\lambda}/\tp{\mu}}(\bsx). 
\end{gathered}
\eeq
Thus the special values of the Schur functions 
in the definition (\ref{Z(s,t)-def}) and (\ref{Z'(s,t,tbar)-def}) 
of the deformed partition functions can be expressed 
as matrix elements of the vertex operators specialized 
to $\bsx = q^{-\rho}$.  Since the potentials 
in the partition functions, too, can be expressed as 
\beq
  |\lambda|+s(s+1) = \langle\lambda,s|L_0|\lambda,s\rangle,\quad
  \Phi_k(\lambda,s) = \langle\lambda,s|H_k|\lambda,s\rangle
\label{L0-Hk-fermion}
\eeq
and the other matrix elements of $L_0$ and $H_k$ vanish, 
one can rewrite the partition functions as 
\beq
  Z(s,\bst) = \langle s|\Gamma_{+}(q^{-\rho})Q^{L_0}
       e^{H(\bst)}\Gamma_{-}(q^{-\rho})|s\rangle
\label{Z(s,t)-fermion}
\eeq
and 
\beq
  Z'(s,\bst,\bar{\bst}) = \langle s|\Gamma_{+}(q^{-\rho})Q^{L_0}
        e^{H(\bst,\bar{\bst})}\Gamma'_{-}(q^{-\rho})|s\rangle, 
\label{Z'(s,t,tbar)-fermion}
\eeq
where 
\beqnn
  H(\bst) = \sum_{k=1}^\infty t_kH_k, \quad 
  H(\bst,\bar{\bst}) 
    = \sum_{k=1}^\infty t_kH_k + \sum_{k=1}^\infty\bar{t}_kH_{-k}. 
\eeqnn

\section{Shift symmetries in fermionic formulation}

\subsection{Three sets of shift symmetries}

Let us introduce another set of fermion bilinears 
$\{V^{(k)}_m\}_{k,m \in \ZZ}$: 
\beqnn
  V^{(k)}_m 
  = q^{k/2}\oint_{|z|=R}
    \frac{dz}{2\pi i}z^m{:}\psi(q^{k/2}z)\psi^*(q^{-k/2}z){:}
  = q^{-km/2}\sum_{n\in\ZZ}q^{kn}{:}\psi_{m-n}\psi^*_{n}{:}. 
\eeqnn
The radius $R > 0$ is chosen arbitrarily. 
$H_k$ and $J_k$ are particular cases of 
these fermion bilinears: 
\beqnn
  H_k = V^{(k)}_0, \quad J_m = V^{(0)}_m. 
\eeqnn
$V^{(k)}_m$'s satisfy the commutation relations 
\beq
  [V^{(k)}_m, V^{(l)}_n]
  = (q^{(lm-kn)/2} - q^{(kn-lm)/2})
    (V^{(k+l)}_{m+n} - \delta_{m+n,0}\frac{q^{k+l}}{1-q^{k+l}}) 
\eeq
for $k$ and $l$ with $k + l \not= 0$ and 
\beq
  [V^{(k)}_m,V^{(-k)}_n] 
  = (q^{-k(m+n)}-q^{k(m+n)})V^{(0)}_{m+n} + m\delta_{m+n,0}, 
\eeq
thus give a realization of (a Central extension of) 
the 2D quantum torus Lie algebra 

``Shift symmetries'' connect these fermion bilinears.   
There are three sets of fundamental shift symmetries: 

\begin{theorem}
\quad
\begin{itemize}
\item[\rm (i)]For $k > 0$ and $m \in \ZZ$, 
\beq
\begin{aligned}
  &\Gamma_{+}(q^{-\rho})\left(
    V^{(k)}_m - \frac{q^k}{1-q^k}\delta_{m,0}
    \right)\Gamma_{+}(q^{-\rho})^{-1}  \\
  &= (-1)^k\Gamma_{-}(q^{-\rho})^{-1}\left(
        V^{(k)}_{m+k} - \frac{q^k}{1-q^k}\delta_{m+k,0}
        \right)\Gamma_{-}(q^{-\rho}). 
\end{aligned}
\label{SSi}
\eeq
\item[\rm (ii)] For $k > 0$ and $m \in \ZZ$, 
\beq
\begin{aligned}
  &\Gamma'_{+}(q^{-\rho})\left(
    V^{(-k)}_m + \frac{1}{1-q^k}\delta_{m,0}
    \right)\Gamma'_{+}(q^{-\rho})^{-1} \\
  &= \Gamma'_{-}(q^{-\rho})^{-1}\left(
        V^{(-k)}_{m+k} + \frac{1}{1-q^k}\delta_{m+k,0}
        \right)\Gamma'_{-}(q^{-\rho}). 
\end{aligned}
\label{SSii}
\eeq
\item[\rm (iii)] For $k,m \in \ZZ$, 
\beq
  q^{W_0/2}V^{(k)}_mq^{-W_0/2} = V^{(k-m)}_m. 
\label{SSiii}
\eeq
\end{itemize}
\end{theorem}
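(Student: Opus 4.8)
The plan is to reduce all three statements to the elementary adjoint action of the vertex operators and of $q^{W_0/2}$ on the fermion fields $\psi(z),\psi^*(z)$. A direct mode computation from the anti-commutation relations gives $[J_m,\psi(w)]=w^m\psi(w)$ and $[J_m,\psi^*(w)]=-w^m\psi^*(w)$; since these commutators are scalar multiples of the fields, exponentiating the generators in $\Gamma_\pm,\Gamma'_\pm$ produces closed-form conjugation rules. For example $\Gamma_{+}(q^{-\rho})\psi(w)\Gamma_{+}(q^{-\rho})^{-1}=\prod_{i\ge1}(1-q^{i-1/2}w)^{-1}\psi(w)$ and $\Gamma_{+}(q^{-\rho})\psi^*(w)\Gamma_{+}(q^{-\rho})^{-1}=\prod_{i\ge1}(1-q^{i-1/2}w)\,\psi^*(w)$, with the analogues for $\Gamma_{-}$ (replace $w$ by $w^{-1}$) and for $\Gamma'_{\pm}$ (replace each factor $1-q^{i-1/2}w^{\pm1}$ by $1+q^{i-1/2}w^{\pm1}$), read off from the definitions.

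For (iii) I would work directly with the mode expansion $V^{(k)}_m=q^{-km/2}\sum_{n\in\ZZ}q^{kn}{:}\psi_{m-n}\psi^*_n{:}$. The commutators $[W_0,\psi_j]=j^2\psi_j$, $[W_0,\psi^*_j]=-j^2\psi^*_j$ give $q^{W_0/2}\psi_{m-n}\psi^*_n q^{-W_0/2}=q^{((m-n)^2-n^2)/2}\psi_{m-n}\psi^*_n=q^{m^2/2-mn}\psi_{m-n}\psi^*_n$. Substituting and collecting the prefactor $q^{-km/2+m^2/2}\sum_n q^{(k-m)n}{:}\psi_{m-n}\psi^*_n{:}$ and comparing with $V^{(k-m)}_m$, whose prefactor is $q^{-(k-m)m/2}=q^{-km/2+m^2/2}$, yields (iii) at once; the normal-ordering constant is $W_0$-invariant and drops out.

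For (i) and (ii) I would feed the conjugation rules into the contour-integral representation. Conjugating $\psi(q^{k/2}z)\psi^*(q^{-k/2}z)$ by $\Gamma_{+}(q^{-\rho})$ multiplies it by $\prod_i(1-q^{i-1/2-k/2}z)/\prod_i(1-q^{i-1/2+k/2}z)$, which telescopes (for $k>0$) to the degree-$k$ polynomial $\prod_{j=1}^{k}(1-q^{e_j}z)$ with $e_j=(2j-k-1)/2$. Conjugating $V^{(k)}_{m+k}$ by $\Gamma_{-}(q^{-\rho})^{-1}(\,\cdot\,)\Gamma_{-}(q^{-\rho})$ produces the same factor in $z^{-1}$, and the extra power turns $z^{m+k}\prod_j(1-q^{e_j}z^{-1})$ into $z^{m}\prod_j(z-q^{e_j})$. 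Since $\sum_j e_j=0$ and $\{e_j\}=\{-e_j\}$, one has $\prod_j(z-q^{e_j})=(-1)^k\prod_j(1-q^{e_j}z)$, and this $(-1)^k$ is precisely what makes the two sides of (i) agree. Part (ii) is the same computation with $\Gamma'_{\pm}$, hence with $1+q^{e_j}z$ replacing $1-q^{e_j}z$; there $\prod_j(z+q^{e_j})=\prod_j(1+q^{e_j}z)$ carries no sign, explaining the absence of $(-1)^k$ in (ii).

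The main obstacle is the normal-ordering anomaly. Conjugating ${:}\psi\psi^*{:}$ rather than $\psi\psi^*$ introduces the c-number difference $(F(z)-1)\langle0|\psi(q^{k/2}z)\psi^*(q^{-k/2}z)|0\rangle$, where $F(z)$ is the telescoped ratio above; after $\oint z^m\frac{dz}{2\pi i}$ this yields a scalar supported at $m=0$ (resp. $m+k=0$ on the $\Gamma_{-}$ side). I would expand the propagator in the annulus fixed by $|q|<1$ and the contour radius $R$, extract the residue, and check that the two anomalies combine into exactly $\tfrac{q^k}{1-q^k}\delta_{m,0}$ and $\tfrac{q^k}{1-q^k}\delta_{m+k,0}$ for (i) (resp. $\tfrac{1}{1-q^k}$ for (ii)). Keeping the expansion domain consistent on both sides and matching the two residues is the delicate point, and is presumably where the careless treatment alluded to in the introduction produces a contradictory constant.
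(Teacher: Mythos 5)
Your proof of part (iii) is correct and is essentially the mode computation the paper refers to. For parts (i) and (ii) your skeleton agrees with the paper's: the multiplicative conjugation rules (\ref{Gamma'-psi}) and their $\Gamma_{\pm}$ analogues, the telescoping of the infinite-product ratio to the degree-$k$ polynomial $\prod_{j=1}^k(1\mp q^{j-(k+1)/2}z)$, and the Laurent identity $z^{m+k}\prod_j(1-q^{e_j}z^{-1})=(-1)^kz^m\prod_j(1-q^{e_j}z)$ that produces the sign $(-1)^k$. But the two issues you defer are exactly where the content of the theorem lies, and your plan for them does not work as stated.

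First, the constants. The terms $\frac{q^k}{1-q^k}\delta_{m,0}$ and $\frac{1}{1-q^k}\delta_{m,0}$ are not conjugation anomalies; they are re-ordering constants, and the paper's proof uses them to remove normal ordering \emph{before} conjugating. For $k>0$,
\begin{align*}
V^{(-k)}_m + \frac{1}{1-q^k}\delta_{m,0}
&= q^{-k/2}\oint\frac{dz}{2\pi i}\,z^m\,\psi(q^{-k/2}z)\psi^*(q^{k/2}z),\\
V^{(k)}_m - \frac{q^k}{1-q^k}\delta_{m,0}
&= -\,q^{k/2}\oint\frac{dz}{2\pi i}\,z^m\,\psi^*(q^{-k/2}z)\psi(q^{k/2}z),
\end{align*}
and $\Gamma_{\pm}$, $\Gamma'_{\pm}$ act on these bare products purely multiplicatively, so no anomaly ever arises. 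Note the anti-ordering in the second line: in the regime $|q^{k/2}z|<|q^{-k/2}z|$ relevant to (i), the expectation value $\langle 0|\psi(q^{k/2}z)\psi^*(q^{-k/2}z)|0\rangle$ that you write down is a divergent series, so it cannot be used as you propose. Moreover, if one does conjugate ${:}\psi\psi^*{:}$ and tracks the c-number $(F(z)-1)\times(\text{propagator})$, its integral against $z^m$ is supported on $m\in\{-k,\dots,-1\}$ on the ``$+$'' side and on $m\in\{1-k,\dots,0\}$ on the ``$-$'' side, not at $m=0$ and $m+k=0$ as you claim (for instance it vanishes at $m=0$, since $F(z)-1$ has no constant term). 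The bookkeeping can be made to close, but not by matching residues at $m=0$ and $m+k=0$.

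Second, the contours. Your polynomial identity equates integrands, not operators: the two sides of (\ref{SSi}) and (\ref{SSii}) arise from conjugation formulas that converge in disjoint domains, $|z|<|q|^{(k-1)/2}$ for the ``$+$'' side and $|z|>|q|^{(k-1)/2}$ for the ``$-$'' side, so they are integrals over different circles $|z|=R_1$ and $|z|=R_2$. The paper finishes by deforming one contour into the other, which is legitimate for $k>0$ because the prefactor is a polynomial and the bare product is regular in the annulus in between; this is precisely the step that fails for $k<0$, where the prefactors acquire poles separating the two contours (Section 3.3). You acknowledge this as ``the delicate point'' but leave it unresolved, so parts (i) and (ii) are not actually established by your argument.
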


(\ref{SSi}) and (\ref{SSiii}) are presented 
in our previous work \cite{NT07,NT08}. 
(\ref{SSii}) is new, and we show the derivation of 
these shift symmetries in detail below. 
We shall also see why (\ref{SSi}) and (\ref{SSii}) 
do not hold for $k < 0$.  

\begin{remark}
In our previous work \cite{NT07,NT08}, 
shift symmetries are presented in such a form as 
\beq
\begin{aligned}
  &\Gamma_{-}(q^{-\rho})\Gamma_{+}(q^{-\rho})\left(
    V^{(k)}_m - \frac{q^k}{1-q^k}\delta_{m,0}
    \right)\Gamma_{+}(q^{-\rho})^{-1}\Gamma_{-}(q^{-\rho})^{-1}\\
  &= (-1)^k\left(V^{(k)}_{m+k} - \frac{q^k}{1-q^k}\delta_{m+k,0}\right) 
\end{aligned}
\label{SSi-bis}
\eeq
and 
\beq
\begin{aligned}
  &\Gamma'_{-}(q^{-\rho})\Gamma'_{+}(q^{-\rho})\left(
    V^{(-k)}_m + \frac{1}{1-q^k}\delta_{m,0}
    \right)\Gamma'_{+}(q^{-\rho})^{-1}\Gamma'_{-}(q^{-\rho})^{-1} \\
  &= V^{(-k)}_{m+k} + \frac{1}{1-q^k}\delta_{m+k,0}. 
\end{aligned}
\label{SSii-bis}
\eeq
In this paper, however, we dare to avoid 
(\ref{SSi-bis}) and (\ref{SSii-bis}) because, 
firstly, it is (\ref{SSi}) and (\ref{SSii}) 
that are used in the subsequent consideration 
and, secondly, the operator products 
$\Gamma_{-}(q^{-\rho})\Gamma_{+}(q^{-\rho})$ and 
$\Gamma'_{-}(q^{-\rho})\Gamma'_{+}(q^{-\rho})$ 
in (\ref{SSi-bis}) and (\ref{SSii-bis}) require a careful treatment.  
\end{remark}

\subsection{Derivation of shift symmetries}

Derivation of (\ref{SSii}) is parallel 
to the proof of (\ref{SSi}) \cite{NT07,NT08}.  
We start from the equality 
\beqnn
  {:}\psi(z)\psi^*(w){:} = \psi(z)\psi^*(w) - \frac{1}{z-w}
\eeqnn
that holds for $|z| > |w|$.  Since $|q| < 1$, 
this equality can be specialized to 
\beqnn
  {:}\psi(q^{-k/2}z)\psi^*(q^{k/2}z){:}
  = \psi(q^{-k/2}z)\psi^*(q^{k/2}z) - \frac{q^{k/2}}{(1-q^k)z}. 
\eeqnn
The fermion bilinears $V^{(-k)}_m$ in (\ref{SSii}) 
can be thereby expressed as 
\beq
  V^{(-k)}_m 
  = q^{-k/2}\oint_{|z|=R}\frac{dz}{2\pi i}
      z^m\psi(q^{-k/2}z)\psi^*(q^{k/2}z) 
      - \frac{1}{1-q^k}\delta_{m,0}. 
\eeq

We now use the formulae 
\beq
\begin{gathered}
  \Gamma'_{\pm}(x)\psi(z)\Gamma'_{\pm}(x)^{-1} 
    = (1 + xz^{\pm 1})\psi(z),\\
  \Gamma'_{\pm}(x)\psi^*(z)\Gamma'_{\pm}(x)^{-1}
    = (1 + xz^{\pm 1})^{-1}\psi^*(z)
\end{gathered}
\label{Gamma'-psi}
\eeq
that holds when $|xz^{\pm 1}| < 1$. 
The formulae for $\Gamma'_{+}(x)$ in (\ref{Gamma'-psi}) 
imply that the equality 
\begin{align*}
  &\Gamma'_{+}(q^{-\rho})\psi(q^{-k/2}z)
    \psi^*(q^{k/2}z)\Gamma'_{+}(q^{-\rho})^{-1} \nonumber\\ 
  &= \frac{\prod_{i=1}^\infty(1+q^{i-1/2}\cdot q^{-k/2}z)}
          {\prod_{i=1}^\infty(1+q^{i-1/2}\cdot q^{k/2}z)}
    \psi(q^{-k/2}z)\psi^*(q^{k/2}z) \nonumber\\
  &= \prod_{i=1}^k (1+q^{i-(k+1)/2}z)\cdot \psi(q^{-k/2}z)\psi^*(q^{k/2}z) 
\end{align*}
holds for $|z| < |q|^{(k-1)/2}$.  
Multiplying this equality by $q^{-k/2}z^m/2\pi i$ and 
integrating it along the circle $|z| = R_1 < |q|^{(k-1)/2}$, 
we find that 
\beq
\begin{aligned}
  &\Gamma'_{+}(q^{-\rho})
   \left(V^{(-k)}_m + \frac{1}{1-q^k}\delta_{m,0}\right)
   \Gamma'_{+}(q^{-\rho})^{-1} \\
  &= q^{-k/2}\oint_{|z|=R_1}\frac{dz}{2\pi i}
     z^m\prod_{i=1}^k (1+q^{i-(k+1)/2}z)
     \cdot\psi(q^{-k/2}z)\psi^*(q^{k/2}z). 
\end{aligned}
\label{Gamma'_{+}-V} 
\eeq

On the other hand, the formulae for $\Gamma'_{-}(x)$ 
in (\ref{Gamma'-psi}) imply that the equality 
\beqnn
\begin{aligned}
  &\Gamma'_{-}(q^{-\rho})^{-1}\psi(q^{-k/2}z)\psi^*(q^{k/2}z)
    \Gamma'_{-}(q^{-\rho}) \\
  &= \frac{\prod_{i=1}^\infty(1+q^{i-1/2}\cdot q^{-k/2}z^{-1})}
          {\prod_{i=1}^\infty(1+q^{i-1/2}\cdot q^{k/2}z^{-1})}
    \psi(q^{-k/2}z)\psi^*(q^{k/2}z) \nonumber\\
  &= \prod_{i=1}^k(1+q^{i-(k+1)/2}z^{-1})\cdot
     \psi(q^{-k/2}z)\psi^*(q^{k/2}z)
\end{aligned}
\eeqnn
holds for $|z| > |q|^{(k-1)/2}$.  Note here 
that the prefactors of $\psi(q^{-k/2}z)\psi^*(q^{k/2}z)$ 
in this equality and the previous ones are connected 
by the simple relation 
\beq
  z^m\prod_{i=1}^k (1+q^{i-(k+1)/2}z) 
  = z^{m+k}\prod_{i=1}^k (1+q^{i-(k+1)/2}z^{-1}). 
\eeq
Therefore, multiplying this relation by $q^{-k/2}z^{m+k}$ 
and integrating it along the circle $|z| = R_2 > |q|^{(k-1)/2}$,  
we obtain the equality 
\beq
\begin{aligned}
  &\Gamma_{-}(q^{-\rho})^{-1}
   \left(V^{(-k)}_{m+k} + \frac{1}{1-q^k}\delta_{m+k,0}\right)
   \Gamma'_{-}(q^{-\rho}) \\
  &= q^{-k/2}\oint_{|z|=R_2}\frac{dz}{2\pi i}
     z^m\prod_{i=1}^k (1+q^{i-(k+1)/2}z)
     \cdot\psi(q^{-k/2}z)\psi^*(q^{k/2}z). 
\end{aligned}
\label{Gamma'_{-}-V} 
\eeq

The right hand side of (\ref{Gamma'_{+}-V}) 
and (\ref{Gamma'_{-}-V}) are almost the same 
except for the difference of the contours. 
Actually they coincide, because 
the integrands have no singularity 
in the annulus $R_1 \le |z| \le R_2$, and 
the contour can be continuously deformed therein 
without changing the contour integral.  
This completes the proof of (\ref{SSii}).

\subsection{Where shift symmetries break down}

Let us consider what occurs when $k < 0$, in other words, 
when $k$ is replaced by $-k$ in the setting 
of the proof of (\ref{SSii}).  We start from the equality 
\beqnn
  {:}\psi(z)\psi^*(w){:} = - \psi^*(w)\psi(z) + \frac{1}{w-z} 
\eeqnn
that holds for $|w|>|z|$.  The fermion bilinear $V^{(k)}_m$ 
can be thereby expressed as 
\beq
  V^{(k)}_m 
  = - q^{k/2}\oint_{|z|=R}\frac{dz}{2\pi i}
      z^m\psi^*(q^{-k/2}z)\psi(q^{k/2}z) 
    + \frac{q^k}{1-q^k}\delta_{m,0}. 
\eeq
Under the adjoint action of $\Gamma'_{\pm}(q^{-\rho})$, 
the operator product $\psi^*(q^{-k/2}z)\psi(q^{k/2}z)$ 
transforms as 
\beqnn
\begin{aligned}
  &\Gamma'_{+}(q^{-\rho})\psi^*(q^{-k/2}z)\psi(q^{k/2}z)
   \Gamma'_{+}(q^{-\rho})^{-1} \\
  &= \prod_{i=1}^k (1 + q^{i-(k+1)/2}z)^{-1}
     \cdot\psi^*(q^{-k/2}z)\psi(q^{k/2}z)
\end{aligned}
\eeqnn
and 
\beqnn
\begin{aligned}
  &\Gamma'_{-}(q^{-\rho})^{-1}\psi^*(q^{-k/2}z)\psi(q^{k/2}z)
   \Gamma'_{-}(q^{-\rho}) \\
  &= \prod_{i=1}^k (1 + q^{i-(k+1)/2}z^{-1})^{-1} 
     \cdot\psi^*(q^{-k/2}z)\psi(q^{k/2}z). 
\end{aligned}
\eeqnn
Consequently, we obtain the following analogues 
of (\ref{Gamma'_{+}-V}) and (\ref{Gamma'_{+}-V}): 

\beqnn
\begin{aligned}
  &\Gamma'_{+}(q^{-\rho})
   \left(- V^{(k)}_m + \frac{q^k}{1-q^k}\delta_{m,0}\right)
   \Gamma'_{+}(q^{-\rho})^{-1} \\
  &= q^{k/2}\oint_{|z|=R_1}\frac{dz}{2\pi i}
     z^m\prod_{i=1}^k (1+q^{i-(k+1)/2}z)^{-1}
     \cdot\psi^*(q^{-k/2}z)\psi(q^{k/2}z),
\end{aligned}\\
\begin{aligned}
  &\Gamma'_{-}(q^{-\rho})^{-1}
   \left(- V^{(k)}_{m-k} + \frac{q^k}{1-q^k}\delta_{m-k,0}\right)
   \Gamma'_{-}(q^{-\rho}) \\
  &= q^{k/2}\oint_{|z|=R_2}\frac{dz}{2\pi i}
     z^m\prod_{i=1}^k (1+q^{i-(k+1)/2}z)^{-1}
     \cdot\psi^*(q^{-k/2}z)\psi(q^{k/2}z). 
\end{aligned}
\eeqnn

Unlike the proof of (\ref{SSii}), 
the integrands of these contour integrals have poles, 
one of which sits on the circle $|z| = |q|^{(k-1)/2}$. 
The contours $|z| = R_1$ and $|z| = R_2$ 
are separated by this circle, and cannot be reached 
from the other by continuous deformation 
without crossing the poles.  Thus an equality 
like (\ref{SSii}) does not hold.  

This explains why (\ref{SSii}) breaks down for $k < 0$.  
By the same reasoning, one can see 
that (\ref{SSi}) does not hold for $k < 0$.

\section{Partition functions as tau functions}

\subsection{Previous model}

In our previous work \cite{NT07,NT08}, 
the partition function $Z(s,\bst)$ is shown 
to be related to a tau function $\tau(x,\bst)$ 
of the 1D Toda hierarchy as 
\beq
  Z(s,\bst) 
  = \exp\left(\sum_{k=1}^\infty\frac{t_kq^k}{1-q^k}\right)
    q^{-s(s+1)(2s+1)/6}\tau(s,\iota(\bst)), 
\label{Z-crystal-tau}
\eeq
where $\iota(\bst)$ denotes the alternating inversion 
\beqnn
  \iota(\bst) = (-t_1, t_2, -t_3, \ldots, (-1)^kt_k,\ldots)
\eeqnn
of $\bst = (t_1,t_2,\ldots)$.  The tau function $\tau(s,\bst)$ 
has the fermionic expression 
\beq
  \tau(s,\bst) 
  = \langle s|\exp\left(\sum_{k=1}^\infty t_kJ_k\right)g|s \rangle, 
\label{crystal-tau}
\eeq
where 
\beq
  g = q^{W_0/2}\Gamma_{-}(q^{-\rho})\Gamma_{+}(q^{-\rho})Q^{L_0}
      \Gamma_{-}(q^{-\rho})\Gamma_{+}(q^{-\rho})q^{W_0/2}. 
\label{crystal-g}
\eeq
The two sets (\ref{SSi}) and (\ref{SSiii}) of 
shift symmetries are used to derive (\ref{Z-crystal-tau}) 
from the fermionic expression (\ref{Z(s,t)-fermion}) 
of the partition function.  

By the same same shift symmetries,  one can also show 
that $g$ satisfies the algebraic relations 
\beq
  J_kg = gJ_{-k}, \quad k = 1,2,\ldots. 
\label{Jg=gJ}
\eeq
This implies that the associated tau function \cite{Takebe91} 
\beqnn
  \tau(s,\bst,\bar{\bst}) 
  = \langle s|\exp\left(\sum_{k=1}^\infty t_kJ_k\right)
    g \exp\left(- \sum_{k=1}^\infty\bar{t}_kJ_{-k}\right)|s\rangle
\eeqnn
of the 2D Toda hierarchy depends on $\bst$ and $\bar{\bst}$ 
through their difference: 
\beqnn
  \tau(s,\bst,\bar{\bst}) = \tau(s,\bst-\bar{\bst}). 
\eeqnn
Consequently, $\tau(s,\bst)$ can be expressed 
in different forms, e.g., 
\beq
\begin{aligned}
  \tau(s,\bst) 
  &= \langle s|g\exp\left(\sum_{k=1}^\infty t_kJ_{-k}\right)|s\rangle\\
  &= \langle s|\exp\left(\frac{1}{2}\sum_{k=1}^\infty t_kJ_k\right)
     g\left(\frac{1}{2}\sum_{k=1}^\infty t_kJ_{-k}\right)|s\rangle. 
\end{aligned}
\eeq

\subsection{Modified model}

We now turn to the modified model, and derive an analogue 
of (\ref{Z-crystal-tau}) from the fermionic expression 
(\ref{Z'(s,t,tbar)-fermion}).  The derivation is mostly parallel 
to the case of the previous model \cite{NT07,NT08}.  

\paragraph{First step:}  
The first set (\ref{SSi}) of shift symmetries, 
applied to $H_k = V^{(k)}_0$, $k = 1,2,\ldots$, 
imply that 
\beqnn
  \Gamma_{+}(q^{-\rho})H_k\Gamma_{+}(q^{-\rho})^{-1} 
  - \frac{q^k}{1-q^k} 
  = (-1)^k\Gamma_{-}(q^{-\rho})^{-1}V^{(k)}_k
    \Gamma_{-}(q^{-\rho}). 
\eeqnn
Moreover, the third set (\ref{SSiii}) of shift symmetries 
imply that 
\beqnn
  V^{(k)}_k = q^{-W_0/2}J_kq^{W_0/2}. 
\eeqnn
From these equalities, we obtain the algebraic relation 
\beqnn
  \Gamma_{+}(q^{-\rho})H_k\Gamma_{+}(q^{-\rho})^{-1}
  = (-1)^k\Gamma_{-}(q^{-\rho})^{-1}
    q^{-W_0/2}J_kq^{W_0/2}\Gamma_{-}(q^{-\rho}) 
    + \frac{q^k}{1-q^k}
\eeqnn
that connects $H_k$ and $J_k$.  This algebraic relation 
can be exponentiated as 
\beq
\begin{aligned}
  &\Gamma_{+}(q^{-\rho})\exp\left(\sum_{k=1}^\infty t_kH_k\right) 
   \Gamma_{+}(q^{-\rho})^{-1} 
   = \exp\left(\sum_{k=1}^\infty\frac{q^kt_k}{1-q^k}\right) \times \\
  &  \times
     \Gamma_{-}(q^{-\rho})^{-1}q^{-W_0/2}
     \exp\left(\sum_{k=1}^\infty(-1)^kt_kJ_k\right)
     q^{W_0/2}\Gamma_{-}(q^{-\rho}). 
\end{aligned}
\label{H_k-J_k-transform}
\eeq

\paragraph{Second step:} 
The second set (\ref{SSii}) of shift symmetries, applied 
to $H_{-k} = V^{(-k)}_0$, $k = 1,2,\ldots$, imply that 
\beqnn
  \Gamma'_{-}(q^{-\rho})^{-1}H_{-k}\Gamma'_{-}(q^{-\rho}) 
  + \frac{1}{1-q^k} 
  = \Gamma'_{+}(q^{-\rho})V^{(-k)}_{-k}\Gamma'_{+}(q^{-\rho})^{-1}. 
\eeqnn
Moreover, the third set (\ref{SSiii}) of shift symmetries 
imply that 
\beqnn
  V^{(-k)}_{-k} = q^{-W_0/2}J_{-k}q^{W_0/2}.
\eeqnn
Consequently, we obtain the algebraic relation 
\beqnn
  \Gamma'_{-}(q^{-\rho})^{-1}H_{-k}\Gamma'_{-}(q^{-\rho}) 
  = \Gamma'_{+}(q^{-\rho})q^{-W_0/2}J_{-k}
    q^{W_0/2}\Gamma'_{+}(q^{-\rho})^{-1}
    - \frac{1}{1-q^k}, 
\eeqnn
which can be exponentiated as 
\beq
\begin{aligned}
  &\Gamma'_{-}(q^{-\rho})^{-1}
   \exp\left(\sum_{k=1}^\infty\bar{t}_kH_{-k}\right)\Gamma'_{-}(q^{-\rho}) 
  = \exp\left(- \sum_{k=1}^\infty\frac{\bar{t}_k}{1-q^k}\right) \times \\
  & \times
    \Gamma'_{+}(q^{-\rho})q^{-W_0/2}
    \exp\left(\sum_{k=1}^\infty\bar{t}_kJ_{-k}\right)
    q^{W_0/2}\Gamma'_{+}(q^{-\rho})^{-1}. 
\end{aligned}
\label{H_{-k}-J_{-k}-transform}
\eeq

\paragraph{Third step:}
Let us factorize the operator in (\ref{Z'(s,t,tbar)-fermion}) as 
\beqnn
\begin{aligned}
  &\Gamma_{+}(q^{-\rho})Q^{L_0}e^{H(\bst,\bar{\bst})}\Gamma'_{-}(q^{-\rho}) 
    = \Gamma_{+}(q^{-\rho})\exp\left(\sum_{k=1}^\infty t_kH_k\right)
      \Gamma_{+}(q^{-\rho})^{-1} \times \\
  &\times 
     \Gamma_{+}(q^{-\rho})Q^{L_0}\Gamma'_{-}(q^{-\rho})\Gamma'_{-}(q^{-\rho})^{-1}
     \exp\left(\sum_{k=1}^\infty\bar{t}_kH_{-k}\right)\Gamma'_{-}(q^{-\rho}). 
\end{aligned}
\eeqnn
By (\ref{H_k-J_k-transform}) and (\ref{H_{-k}-J_{-k}-transform}), 
we can rewrite it as 
\beqnn
\begin{aligned}
  &\Gamma_{+}(q^{-\rho})Q^{L_0}e^{H(\bst,\bar{\bst})}\Gamma'_{-}(q^{-\rho})
  = \exp\left(\sum_{k=1}^\infty\frac{q^kt_k-\bar{t}_k}{1-q^k}\right)
    \Gamma_{-}(q^{-\rho})^{-1}q^{-W_0/2} \times \\
  &\times 
    \exp\left(\sum_{k=1}^\infty(-1)^kt_kJ_k\right)
    g'\exp\left(\sum_{k=1}^\infty\bar{t}_kJ_{-k}\right)
    q^{W_0/2}\Gamma'_{+}(q^{-\rho})^{-1}, 
\end{aligned}
\eeqnn
where 
\beq
  g' = q^{W_0/2}\Gamma_{-}(q^{-\rho})\Gamma_{+}(q^{-\rho})
       Q^{L_0}\Gamma'_{-}(q^{-\rho})\Gamma'_{+}(q^{-\rho})q^{-W_0/2}. 
\label{crystal-g'}
\eeq
Plugging this expression into (\ref{Z'(s,t,tbar)-fermion}) and 
noting that 
\beqnn
\begin{gathered}
  \langle s|\Gamma_{-}(q^{-\rho})q^{-W_0/2} 
    = q^{-s(s+1)(2s+1)/12}\langle s|, \\
  q^{W_0/2}\Gamma'_{+}(q^{-\rho})|s\rangle 
    = |s\rangle q^{s(s+1)(2s+1)/12}, 
\end{gathered}
\eeqnn
we end up with the following result: 

\begin{theorem}
$Z'(s,\bst,\bar{\bst})$ can be expressed as 
\beq
  Z'(s,\bst,\bar{\bst}) 
  = \exp\left(\sum_{k=1}^\infty\frac{q^kt_k-\bar{t}_k}{1-q^k}\right) 
    \tau'(s,\iota(\bst),-\bar{\bst}), 
\label{Z-crystal-tau'}
\eeq
where $\tau'(s,\bst,\bar{\bst})$ is the tau function 
\beq
  \tau'(s,\bst,\bar{\bst}) 
  = \langle s|\exp\left(\sum_{k=1}^\infty t_kJ_k\right)
    g'\exp\left(- \sum_{k=1}^\infty\bar{t}_kJ_{-k}\right)|s\rangle
\label{crystal-tau'}
\eeq
of the 2D Toda hierarchy defined by the operator (\ref{crystal-g'}). 
\end{theorem}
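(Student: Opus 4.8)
The plan is to turn the fermionic expression (\ref{Z'(s,t,tbar)-fermion}) into the tau-function form (\ref{crystal-tau'}) by conjugating the two exponentials $\exp(\sum_k t_k H_k)$ and $\exp(\sum_k \bar t_k H_{-k})$ past the vertex operators, using the three sets of shift symmetries of Theorem~1. The guiding idea is that these symmetries convert the generators $H_{\pm k} = V^{(\pm k)}_0$, which carry no direct interpretation as Toda flows, into the current modes $J_{\pm k}$ that generate the commuting flows of the 2D Toda hierarchy.

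First I would handle the positive potentials. Applying (\ref{SSi}) with $m = 0$ to $H_k = V^{(k)}_0$ transports the conjugation by $\Gamma_+(q^{-\rho})$ onto $V^{(k)}_k$ conjugated by $\Gamma_-(q^{-\rho})^{-1}$, up to the scalar $(-1)^k$ and the central constant $q^k/(1-q^k)$; the third symmetry (\ref{SSiii}) then rewrites $V^{(k)}_k = q^{-W_0/2}J_kq^{W_0/2}$. Exponentiating the resulting linear relation produces (\ref{H_k-J_k-transform}), in which the scalar prefactor $\exp(\sum_k q^k t_k/(1-q^k))$ comes from the exponentiated central constants and the signs $(-1)^k$ reorganize into the alternating inversion $\iota(\bst)$. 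Symmetrically, for the negative potentials I would apply (\ref{SSii}) with $m = 0$ to $H_{-k} = V^{(-k)}_0$, moving the conjugation by $\Gamma'_-(q^{-\rho})$ onto $V^{(-k)}_{-k}$ conjugated by $\Gamma'_+(q^{-\rho})$, and then use (\ref{SSiii}) again to write $V^{(-k)}_{-k} = q^{-W_0/2}J_{-k}q^{W_0/2}$; this gives (\ref{H_{-k}-J_{-k}-transform}) with the scalar $\exp(-\sum_k \bar t_k/(1-q^k))$.

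With both conjugation formulae available, I would factorize the operator inside (\ref{Z'(s,t,tbar)-fermion}) by inserting $\Gamma_+(q^{-\rho})^{-1}\Gamma_+(q^{-\rho})$ and $\Gamma'_-(q^{-\rho})\Gamma'_-(q^{-\rho})^{-1}$ around $Q^{L_0}$, substitute the two formulae, and regroup the surviving vertex operators together with $Q^{L_0}$ and the $q^{\pm W_0/2}$ factors into the single operator $g'$ of (\ref{crystal-g'}). Outside $g'$ there remain the current exponentials $\exp(\sum_k(-1)^k t_k J_k)$ and $\exp(\sum_k \bar t_k J_{-k})$, together with $\Gamma_-(q^{-\rho})^{-1}q^{-W_0/2}$ on the far left and $q^{W_0/2}\Gamma'_+(q^{-\rho})^{-1}$ on the far right. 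Since $\Gamma_-(q^{-\rho})$ acts trivially on $\langle s|$ and $\Gamma'_+(q^{-\rho})$ acts trivially on $|s\rangle$, these reduce, by the two boundary identities stated just before the theorem, to the mutually cancelling scalars $q^{\mp s(s+1)(2s+1)/12}$, returning the bra and ket to $\langle s|$ and $|s\rangle$. Reading off the arguments, $(-1)^k t_k$ in the positive flow is exactly $\iota(\bst)$ and the negative flow appears with argument $-\bar\bst$, so comparison with (\ref{crystal-tau'}) yields the claimed identity with prefactor $\exp(\sum_k(q^k t_k - \bar t_k)/(1-q^k))$.

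The main obstacle --- and the reason the derivation must be done carefully rather than via a single symmetric formula --- is the bookkeeping of the central-extension constants together with the restriction $k > 0$ in (\ref{SSi}) and (\ref{SSii}). As the Remark and Section~3.3 stress, these two symmetries genuinely fail for $k < 0$, so the positive and negative potentials must be processed by two \emph{different} symmetries rather than one; a careless symmetric treatment is precisely what produced the contradictory result of \cite{Takasaki12}. The delicate point is then to check that the constants $q^k/(1-q^k)$ and $1/(1-q^k)$ exponentiate cleanly into the single scalar $\exp(\sum_k(q^k t_k - \bar t_k)/(1-q^k))$, and that composing the two conjugations and absorbing the leftover vertex operators into $g'$ introduces no further normal-ordering anomaly beyond the $W_0$-scalars that cancel against each other.
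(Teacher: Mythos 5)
Your proposal is correct and follows essentially the same route as the paper: conjugate the two potential exponentials into current exponentials via the shift symmetries (your two steps are exactly the paper's derivation of (\ref{H_k-J_k-transform}) and (\ref{H_{-k}-J_{-k}-transform})), then insert $\Gamma_{+}(q^{-\rho})^{-1}\Gamma_{+}(q^{-\rho})$ and $\Gamma'_{-}(q^{-\rho})\Gamma'_{-}(q^{-\rho})^{-1}$ around $Q^{L_0}$, regroup into $g'$, and cancel the boundary scalars $q^{\mp s(s+1)(2s+1)/12}$. One small correction: in your second step (\ref{SSii}) must be applied with $m=-k$, not $m=0$ --- only that choice puts $H_{-k}=V^{(-k)}_{m+k}$ inside the $\Gamma'_{-}(q^{-\rho})^{-1}(\cdot)\,\Gamma'_{-}(q^{-\rho})$ conjugation (the side actually needed for the factorization, since $\Gamma'_{-}$ sits to the right of $e^{H}$) with $V^{(-k)}_{-k}$ on the $\Gamma'_{+}$ side; the structure and outcome you describe are precisely this $m=-k$ case, so only the label is off, whereas a literal $m=0$ application would instead relate $\Gamma'_{+}H_{-k}\Gamma'_{+}(q^{-\rho})^{-1}$ to $V^{(-k)}_{k}$ and would not fit the factorization.
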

  
\begin{remark}
If the shift symmetries (\ref{SSi}) and (\ref{SSii}) 
were valid for $k < 0$ as well, one could show 
that $g'$ satisfy the algebraic relations 
\beqnn
  J_{\pm k}g' = g'J_{\pm k}, \quad k = 1,2,\ldots 
  \quad(\text{\it wrong statement\/}). 
\eeqnn
This imply that the tau function would become 
an almost trivial one: 
\beqnn
  \tau'(s,\bst,\bar{\bst}) 
  = \exp\left(- \sum_{k=1}^\infty kt_k\bar{t}_k\right)
    \langle s|g'|s\rangle 
   \quad(\text{\it wrong statement\/}). 
\eeqnn
This is a typical example of wrong consequences 
derived by a careless use of shift symmetries \cite{Takasaki12}. 
\end{remark}

\section{Shift symmetries in matrix formulation}

\subsection{Matrix representation of fermion bilinears}

It is well known \cite{MJD-book} that the mapping 
\beq
  X = \sum_{i,j\in\ZZ}x_{ij}E_{ij} \;\mapsto\;
  \hat{X} =   \sum_{i,j\in\ZZ}x_{ij}{:}\psi_{-i}\psi^*_j{:} 
\label{Xhat-def}
\eeq
gives a projective representation (in other words, a central extension) 
of the Lie algebra $\mathrm{gl}(\infty)$ of $\ZZ\times\ZZ$ matrices. 
In the following, we naively identify $\hat{X}$ and $X$.  
The fundamental fermion bilinears $L_0,W_0,H_k,J_k$ 
can be thereby identified with the $\ZZ\times\ZZ$ matrices 
\beq
  L_0 = \Delta,\quad W_0 = \Delta^2,\quad 
  H_k = q^{k\Delta},\quad J_k = \Lambda^k, 
\label{LWHK-matrix}
\eeq
where $\Delta$ and $\Lambda$ are the matrices 
\beqnn
  \Delta = \sum_{i\in\ZZ}iE_{ii},\quad 
  \Lambda = \sum_{i\in\ZZ}E_{i,i+1} 
\eeqnn
that satisfy the twisted canonical commutation relation 
\beq
  [\Lambda,\Delta] = \Lambda. 
\label{tCCR}
\eeq
$\Lambda$ and $\Delta$ amount to the shift and multiplication 
operators $e^{\rd/\rd s}$ and $s$ in the Lax formalism 
of the 2D Toda hierarchy \cite{UT84,TT95}, 

The fermionic realization $\{V^{(k)}_m\}_{k,m\in\ZZ}$ 
of the quantum torus Lie algebra correspond 
to the $\ZZ\times\ZZ$ matrices 
\beq
  V^{(k)}_m = q^{-km/2}\Lambda^m q^{k\Delta} 
            = q^{km/2}q^{k\Delta}\Lambda^m. 
\label{V-matrix}
\eeq
These matrices form an associative algebra 
rather than a Lie algebra.  The commutation relations 
\beq
  [V^{(k)}_m, V^{(l)}_n]
  = (q^{(lm-kn)/2} - q^{(kn-lm)/2})V^{(k+l)}_{m+n}
\label{V-matrix-comm-rel}
\eeq
can be derived from the fundamental relation
\beq
  \Lambda q^\Delta = q q^\Delta\Lambda
\label{Lamda-qDelta-rel}
\eeq
of the generators $\Lambda$ and $q^{\Delta}$.  

Last but not least, the vertex operators $\Gamma_{\pm}(x)$ 
and $\Gamma'_{\pm}(x)$ turn out to have a matrix representation 
of the form 
\beq
\begin{gathered}
  \Gamma_{\pm}(x) 
  = \exp\left(\sum_{k=1}^\infty\frac{x^k}{k}\Lambda^{\pm k}\right) 
  = (1 - x\Lambda^{\pm 1})^{-1}, \\
  \Gamma'_{\pm}(x) 
  = \exp\left(- \sum_{k=1}^\infty\frac{(-x)^k}{k}\Lambda^{\pm}\right)
  = (1 + x\Lambda^{\pm 1}). 
\end{gathered}
\label{Gamma-matrix}
\eeq
The inverse of $1 - x\Lambda^{\pm 1}$ is understood 
to be the geometric series 
\beqnn
  (1 - x\Lambda^{\pm 1})^{-1} 
  = 1 + x\Lambda^{\pm 1} + x^2\Lambda^{\pm 2} + \cdots. 
\eeqnn
Among the four matrices of (\ref{Gamma-matrix}), 
those with label ``$+$'' are upper triangular, 
and those with label ``$-$'' are lower triangular.  
The matrix representation of $\Gamma_{\pm}(q^{-\rho})$ 
and $\Gamma'_{\pm}(q^{-\rho})$ is an infinite product 
of these matrices specialized to $x = q^{i-1/1}$: 
\beq
\begin{gathered}
  \Gamma_{\pm}(q^{-\rho}) 
  = \prod_{i=1}^\infty (1 - q^{i-1/2}\Lambda^{\pm 1})^{-1},\\
  \Gamma'_{\pm}(q^{-\rho}) 
  = \prod_{i=1}^\infty (1 + q^{i-1/2}\Lambda^{\pm 1}). 
\end{gathered}
\label{GG'-matrix}
\eeq
Thus $\Gamma_{\pm}(q^{-\rho})$ and $\Gamma'_{\pm}(q^{-\rho})$ 
may be thought of as matrix-valued quantum dilogarithmic functions 
in the sense of Faddeev et al. \cite{FV93,FK93}.

\subsection{Reformulation of shift symmetries}

The shift symmetries (\ref{SSi}), (\ref{SSii}) and (\ref{SSiii}) 
can be reformulated in the matrix representation as follows. 

\begin{theorem}
The $\ZZ\times\ZZ$ matrices (\ref{LWHK-matrix}), (\ref{V-matrix}) 
and (\ref{GG'-matrix}) satisfy the following analogues 
of (\ref{SSi}), (\ref{SSii}) and (\ref{SSiii}): 
\begin{gather}
  \Gamma_{+}(q^{-\rho})V^{(k)}_m\Gamma_{+}(q^{-\rho})^{-1} 
  = (-1)^k\Gamma_{-}(q^{-\rho})^{-1}V^{(k)}_{m+k}\Gamma_{-}(q^{-\rho}), 
\label{SSi-matrix}\\
  \Gamma'_{+}(q^{-\rho})V^{(-k)}_m\Gamma'_{+}(q^{-\rho})^{-1} 
  = \Gamma'_{-}(q^{-\rho})^{-1}V^{(-k)}_{m+k}\Gamma'_{-}(q^{-\rho}), 
\label{SSii-matrix}\\
  q^{\Delta^2/2}V^{(k)}_mq^{-\Delta^2/2} = V^{(k-m)}_m. 
\label{SSiii-matrix}
\end{gather}
Here $k$ and $m$ in (\ref{SSi-matrix}) and (\ref{SSii-matrix}) 
range over $k > 0$ and $m \in \ZZ$, and (\ref{SSiii-matrix}) 
holds for $k \in \ZZ$ and $m \in \ZZ$.  
\end{theorem}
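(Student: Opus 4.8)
The plan is to reduce all three identities to elementary manipulations in the associative algebra generated by $\Lambda$ and $q^{\Delta}$, using only the fundamental relation (\ref{Lamda-qDelta-rel}) and its consequence that, for any formal power series $f$, one has $q^{a\Delta}f(\Lambda) = f(q^{-a}\Lambda)q^{a\Delta}$. The decisive structural observation is that, by (\ref{GG'-matrix}), each of $\Gamma_{\pm}(q^{-\rho})$ and $\Gamma'_{\pm}(q^{-\rho})$ is a function of $\Lambda^{\pm 1}$ alone; hence it commutes with every power $\Lambda^m$, and in the conjugation of $V^{(k)}_m = q^{-km/2}\Lambda^m q^{k\Delta}$ the factor $\Lambda^m$ simply passes through. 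The entire content of (\ref{SSi-matrix}) and (\ref{SSii-matrix}) is therefore the conjugation of $q^{\pm k\Delta}$ by a matrix-valued quantum dilogarithm.

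First I would dispose of (\ref{SSiii-matrix}), which is purely diagonal. Since $q^{\Delta^2/2}$ commutes with $q^{k\Delta}$ and acts on $\Lambda^m = \sum_i E_{i,i+m}$ by $q^{\Delta^2/2}E_{i,i+m}q^{-\Delta^2/2} = q^{(i^2-(i+m)^2)/2}E_{i,i+m}$, a one-line computation gives $q^{\Delta^2/2}\Lambda^m q^{-\Delta^2/2} = q^{-m^2/2}q^{-m\Delta}\Lambda^m$; inserting this into (\ref{V-matrix}) and reorganizing with (\ref{Lamda-qDelta-rel}) produces exactly $V^{(k-m)}_m$. No convergence issue arises and the identity holds for all $k,m\in\ZZ$.

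For (\ref{SSi-matrix}) I would compute each side independently. Writing $\Gamma_{+}(q^{-\rho}) = f(\Lambda)$ with $f(\Lambda) = \prod_{i\ge1}(1-q^{i-1/2}\Lambda)^{-1}$ and using $q^{k\Delta}f(\Lambda)^{-1} = f(q^{-k}\Lambda)^{-1}q^{k\Delta}$, the left side becomes $q^{-km/2}\Lambda^m\,[f(\Lambda)f(q^{-k}\Lambda)^{-1}]\,q^{k\Delta}$, in which the ratio of infinite products telescopes to the finite product $\prod_{i=1}^{k}(1-q^{i-k-1/2}\Lambda)$. Treating $\Gamma_{-}(q^{-\rho})^{-1}$ on the right side the same way turns it into $(-1)^k q^{-k(m+k)/2}\Lambda^{m+k}\prod_{i=1}^{k}(1-q^{i-1/2}\Lambda^{-1})q^{k\Delta}$. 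Factoring $\Lambda^{-k}$ out of the product on the right, and pulling the sign and a power of $q$ out of the left-hand product via $1-q^{i-k-1/2}\Lambda = -q^{i-k-1/2}(\Lambda - q^{k+1/2-i})$, both sides reduce to the common expression $(-1)^k q^{-k(m+k)/2}\Lambda^m\prod_{i=1}^{k}(\Lambda - q^{i-1/2})q^{k\Delta}$, whence equality. The proof of (\ref{SSii-matrix}) is identical in structure, now with the polynomial dilogarithms $\Gamma'_{\pm}(q^{-\rho}) = \prod_{i\ge1}(1+q^{i-1/2}\Lambda^{\pm1})$; the sign $(-1)^k$ is absent precisely because the two leading $q$-powers combine without an overall sign.

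The step I expect to require the most care is the telescoping of the ratio of infinite products $f(\Lambda)f(q^{\mp k}\Lambda)^{-1}$. This must be justified as an identity of $\ZZ\times\ZZ$ matrices: one checks that $f(\Lambda)$ is well defined because each matrix entry receives contributions from only finitely many factors, and that the cofinite cancellation of factors is legitimate in this entrywise sense. It is exactly here that the restriction $k>0$ is used, for then the surviving factors form the honest polynomial $\prod_{i=1}^{k}(1-q^{i-k-1/2}\Lambda)$ with no inverse factors, whereas for $k<0$ the ratio would leave inverse (denominator) factors, mirroring the pole obstruction that breaks the fermionic shift symmetries above. Once this telescoping is in hand, the remaining work is the routine bookkeeping of powers of $q$, powers of $\Lambda$, and the sign $(-1)^k$ sketched above.
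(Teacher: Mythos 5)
Your proposal is correct and takes essentially the same route as the paper's proof: conjugate $q^{\pm k\Delta}$ by the matrix-valued dilogarithms using $q^{a\Delta}f(\Lambda)=f(q^{-a}\Lambda)q^{a\Delta}$, telescope the ratio of infinite products to a finite product, and match the two sides via the elementary rearrangement identity (the analogue of (\ref{SSii-matrix-key})), with (\ref{SSiii-matrix}) verified by direct entrywise computation. As a side remark, your exponent $i-k-1/2$ in the finite product for (\ref{SSi-matrix}) is the correct one; the paper's displayed intermediate formulas for that case, $\prod_{i=1}^k(1-q^{i-k+1/2}\Lambda)$ and the subsequent key identity, contain typographical errors which your computation silently fixes.
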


\begin{proof}
For a comparison with the proof of (\ref{SSii}), 
let us show the derivation of (\ref{SSii-matrix}) in detail. 
A technical clue is the algebraic relation 
\beq
  q^{-k\Delta}\Lambda^n q^{k\Delta} = q^{kn}\Lambda^n 
\label{qDelta-Lambda-rel}
\eeq
that is a partially exponentiated form of (\ref{tCCR}). 

We first consider the left hand side of (\ref{SSii-matrix}), 
namely, 
\beqnn
  \Gamma'_{+}(q^{-\rho})V^{(-k)}_m\Gamma'_{+}(q^{-\rho})^{-1} 
  = q^{km/2}\Lambda^m\Gamma'_{+}(q^{-\rho})q^{-k\Delta}
    \Gamma'_{+}(q^{-\rho})^{-1}. 
\eeqnn
(\ref{qDelta-Lambda-rel}) implies that 
\beqnn
  q^{-k\Delta}\Gamma'_{+}(q^{-\rho})^{-1}q^{k\Delta}
  = \prod_{i=1}^\infty
    (1 + q^{i-1/2}q^{-k\Delta}\Lambda q^{k\Delta} )^{-1}
  = \prod_{i=1}^\infty(1 + q^{i+k-1/2}\Lambda)^{-1}. 
\eeqnn
Multiplying this equality by $\Gamma'_{+}(q^{-\rho})$ 
and $q^{-k\Delta}$ from the left and right sides 
yields the equality 
\begin{align*}
  \Gamma'_{+}(q^{-\rho})q^{-k\Delta}\Gamma'_{+}(q^{-\rho})^{-1}
  &= \prod_{i=1}^\infty(1 + q^{i-1/2}\Lambda) 
     \cdot\prod_{i=1}^\infty(1 + q^{i+k-1/2}\Lambda)^{-1}
     \cdot q^{-k\Delta} \\
  &= \prod_{i=1}^k(1 + q^{i-1/2}\Lambda)\cdot q^{-k\Delta}, 
\end{align*}
which implies that 
\beqnn
  \Gamma'_{+}(q^{-\rho})V^{(-k)}_m\Gamma'_{+}(q^{-\rho})^{-1}
  = q^{km/2}\Lambda^m\prod_{i=1}^k(1 + q^{i-1/2}\Lambda)
    \cdot q^{-k\Delta}. 
\eeqnn

We now consider the right hand side of (\ref{SSii-matrix}), 
namely, 
\beqnn
  \Gamma'_{-}(q^{-\rho})^{-1}V^{(-k)}_{m+k}\Gamma'_{-}(q^{-\rho}) 
  = q^{k(m+k)/2}\Lambda^{m+k}
    \Gamma'_{-}(q^{-\rho})^{-1}q^{-k\Delta}\Gamma'_{-}(q^{-\rho}). 
\eeqnn
In the same way as we have followed above, 
we can derive the equality 
\begin{align*}
  \Gamma'_{-}(q^{-\rho})^{-1}q^{-k\Delta}\Gamma'_{-}(q^{-\rho}) 
  &= \prod_{i=1}^\infty(1 + q^{i-1/2}\Lambda^{-1})^{-1}
     \cdot\prod_{i=1}^\infty(1 + q^{i-k-1/2}\Lambda^{-1})
     \cdot q^{-k\Delta} \nonumber\\
  &= \prod_{i=1}^k(1 + q^{i-k-1/2}\Lambda^{-1})\cdot q^{-k\Delta}. 
\end{align*}
Since 
\beq
  q^{k^2/2}\Lambda^k \prod_{i=1}^k(1 + q^{i-k-1/2}\Lambda^{-1})
  = \prod_{i=1}^k(1 + q^{i-1/2}\Lambda), 
\label{SSii-matrix-key}
\eeq
we can conclude that 
\beqnn
  \Gamma'_{-}(q^{-\rho})^{-1}V^{(-k)}_{m+k}\Gamma'_{-}(q^{-\rho}) 
  = q^{km/2}\Lambda^m\prod_{i=1}^k(1 + q^{i-1/2}\Lambda)\cdot q^{-k\Delta}. 
\eeqnn

Thus both hand sides of (\ref{SSii-matrix}) turn out to coincide.  
We can derive (\ref{SSi-matrix}) in the same way from the equalities 
\beqnn
\begin{aligned}
  \Gamma_{+}(q^{-\rho})V^{(k)}_m\Gamma_{+}(q^{-\rho})^{-1} 
  &= q^{-km/2}\Lambda^m
    \prod_{i=1}^k(1 - q^{i-k+1/2}\Lambda)\cdot q^{k\Delta},\\
  \Gamma_{-}(q^{-\rho})^{-1}V^{(k)}_{m+k}\Gamma_{-}(q^{-\rho}) 
  &= q^{-k(m+k)/2}\Lambda^{m+k}
    \prod_{i=1}^k(1 - q^{i-1/2}\Lambda^{-1})\cdot q^{k\Delta}
\end{aligned}
\eeqnn
and 
\beqnn
  q^{-k^2/2}\Lambda^k\prod_{k=1}^k(1 - q^{i-1/2}\Lambda^{-1}) 
  = (-1)^kq^{-km/2}\Lambda^m
    \prod_{k=1}^k(1 - q^{i-k+1/2}\Lambda)\cdot q^{k\Delta}.
\eeqnn

Lastly, (\ref{SSiii-matrix}) can be derived by straightforward 
calculation of the matrix elements of both hand sides 
as done in the language of fermions \cite{NT07,NT08}. 
\end{proof}

\subsection{Where shift symmetries break down}

Just as in the fermionic formulation, 
(\ref{SSi-matrix}) and (\ref{SSii-matrix})
do not hold for $k < 0$.  
Let us replace $k \to -k$ in (\ref{SSi-matrix}) 
and (\ref{SSii-matrix}), and show that 
the equalities do not hold. 

By replacing $k \to -k$, both hand sides of (\ref{SSi-matrix}) 
turn into 
\beqnn
  \Gamma'_{+}(q^{-\rho})V^{(k)}_m\Gamma'_{+}(q^{-\rho})^{-1} 
  = q^{-km/2}\Lambda^m
    \Gamma'_{+}(q^{-\rho})q^{k\Delta}\Gamma'_{+}(q^{-\rho})^{-1}
\eeqnn
and 
\beqnn
  \Gamma'_{-}(q^{-\rho})^{-1}V^{(k)}_{m-k}\Gamma'_{-}(q^{-\rho})
  = q^{-k(m-k)/2}\Lambda^{m-k}
    \Gamma'_{-}(q^{-\rho})^{-1}q^{k\Delta}\Gamma'_{-}(q^{-\rho}). 
\eeqnn
As in the proof of Theorem 3, we can calculate 
the triple product on the right hand side as 
\beqnn
  \Gamma'_{+}(q^{-\rho})V^{(k)}_m\Gamma'_{+}(q^{-\rho})^{-1} 
  = q^{-km/2}\Lambda^m
    \prod_{i=1}^k(1 + q^{i-k-1/2}\Lambda)^{-1}\cdot q^{k\Delta}
\eeqnn
and 
\beqnn
  \Gamma'_{-}(q^{-\rho})^{-1}V^{(k)}_{m-k}\Gamma'_{-}(q^{-\rho})
  = q^{-k(m-k)/2}\Lambda^{m-k}
    \prod_{i=1}^k(1 + q^{i-1/2}\Lambda^{-1})^{-1}\cdot q^{k\Delta}. 
\eeqnn

Let us consider the inverse matrices $(1 + q^{i-k-1/2}\Lambda)^{-1}$ 
and $(1 + q^{i-1/2}\Lambda^{-1})^{-1}$ that show up 
in these expressions.  They are understood to be given 
by geometric series: 
\beqnn
\begin{gathered}
  (1 + q^{i-k-1/2}\Lambda)^{-1} 
    = 1 - q^{i-k-1/2}\Lambda + (q^{i-k-1/2}\Lambda)^2 - \cdots,\\
  (1 + q^{i-1/2}\Lambda^{-1})^{-1} 
    = 1 - q^{i-1/2}\Lambda^{-1} + (q^{i-1/2}\Lambda^{-1})^2 - \cdots. 
\end{gathered}
\eeqnn
In particular,  $(1 + q^{i-k-1/2}\Lambda)^{-1}$ and 
$(1 + q^{i-1/2}\Lambda^{-1})^{-1}$ are upper triangular 
and lower triangular, respectively.  Consequently, 
unlike the equality (\ref{SSii-matrix-key}) 
that underlies (\ref{SSii-matrix}), we have the inequality 
\beq
  q^{k^/2}\Lambda^{-k}\prod_{i=1}^k(1 + q^{i-1/2}\Lambda^{-1})^{-1} 
  \not= \prod_{i=1}^k(1 + q^{i-k-1/2}\Lambda)^{-1}. 
\label{SSii-matrix-down}
\eeq

(\ref{SSii-matrix-down}) implies that 
\beqnn
\Gamma'_{+}(q^{-\rho})V^{(k)}_m\Gamma'_{+}(q^{-\rho})^{-1}
\not= \Gamma'_{-}(q^{-\rho})^{-1}V^{(k)}_{m-k}\Gamma'_{-}(q^{-\rho}). 
\eeqnn
By the same reasoning, one can confirm that 
\beqnn
  \Gamma_{+}(q^{-\rho})V^{(-k)}_m\Gamma_{+}(q^{-\rho})^{-1} 
  \not= (-1)^k\Gamma_{-}(q^{-\rho})^{-1}V^{(-k)}_{m-k}\Gamma_{-}(q^{-\rho}). 
\eeqnn

\section{Relation to Ablowitz-Ladik hierarchy}

\subsection{Two reductions of 2D Toda hierarchy}

In the Lax formalism \cite{UT84,TT95}, the 2D Toda hierarchy 
is formulated as a system of Lax equations 
\beq
\begin{gathered}
  \frac{\rd L}{\rd t_k} = [B_k,L], \quad 
  \frac{\rd\bar{L}}{\rd t_k} = [B_k,\bar{L}],\\
  \frac{\rd L}{\rd\bar{t}_k} = [\bar{B}_k,L], \quad
  \frac{\rd\bar{L}}{\rd\bar{t}_k} = [\bar{B}_k,\bar{L}] 
\end{gathered}
\label{2DToda-Laxeq}
\eeq
for two Lax operators $L$ and $\bar{L}$.  
In the case of a bi-infinite lattice, 
$L$ and $\bar{L}$ are difference operators 
(i.e., linear combinations of the shift operators 
$e^{n\rd_s}$ in the lattice coordinate $s$) of the form 
\beqnn
  L = e^{\rd_s} + \sum_{n=1}^\infty u_ne^{(1-n)\rd_s},\quad
  \bar{L}^{-1} = \bar{u}_0e^{-\rd_s} 
     + \sum_{n=1}^\infty \bar{u}_n e^{(n-1)\rd_s}, 
\eeqnn
where $u_n$ and $\bar{u}_n$ depend on $s$, $\bst$ 
and $\bar{\bst}$.  $B_k$ and $\bar{B}_k$ are constructed 
from $L$ and $\bar{L}$ as 
\beqnn
  B_k = (L^k)_{\ge 0}, \quad 
  \bar{B}_k = (\bar{L}^{-k})_{<0}, 
\eeqnn
where $(\quad)_{\ge 0}$ and $(\quad)_{<0}$ 
denote the projections to linear combinations 
of $e^{n\rd_s}$ for $n \ge 0$ and $n < 0$.  

The 1D Toda hierarchy and the Ablowitz-Ladik hierarchy 
are two typical reductions of the 2D Toda hierarchy. 
Let us recall the reduction procedure.

\subsubsection*{Reduction to 1D Toda hierarchy}

The 1D Toda hierarchy can be derived 
from the 2D Toda hierarchy by imposing 
the reduction condition 
\beq
  L = \bar{L}^{-1}. 
\label{1DToda-reduction}
\eeq
Both hand side of this equality define 
a difference operator of the form 
\beq
  \frakL = e^{\rd_s} + b + ce^{-\rd_s}  
\eeq
that satisfies the Lax equations 
\beqnn
  \frac{\rd\frakL}{\rd t_k} = [B_k,\frakL], \quad 
  \frac{\rd\frakL}{\rd\bar{t}_k} = [\bar{B}_k,\frakL]. 
\eeqnn 
Actually, since $B_k$, $\bar{B}_k$ and $\frakL$ are related as 
\beqnn
  B_k + \bar{B}_k = \frakL^k, 
\eeqnn
the time evolutions in $\bst$ and $\bar{\bst}$ 
are no longer independent in the sense that 
\beqnn
  \frac{\rd\frakL}{\rd t_k} + \frac{\rd\frakL}{\rd\bar{t}_k} 
  = [B_k,\frakL] + [\bar{B}_k,\frakL] 
  = 0. 
\eeqnn
Thus the time evolutions in $\bar{\bst}$ are redundant, 
and we are left with just one set of Lax equations 
\beq
  \frac{\rd\frakL}{\rd t_k} = [B_k,\frakL], \quad 
  B_k = (\frakL^k)_{\ge 0}.  
\eeq
This is one of the well known Lax forms 
of the 1D Toda hierarchy.

\subsubsection*{Reduction to Ablowitz-Ladik hierarchy} 

The Ablowitz-Ladik hierarchy can be derived 
from the 2D Toda hierarchy as a kind of 
``rational reduction''.  
In the formulation of Brini et al. \cite{BCR11}, 
$L$ and $\bar{L}$ are chosen to be quotients 
\beq
  L = BC^{-1},\quad \bar{L}^{-1} = CB^{-1} 
\label{AL-LLbar}
\eeq
of two difference operators of the form 
\beqnn
  B = e^{\rd_s} - b,\quad C = 1 - ce^{-\rd_s}. 
\eeqnn
$B^{-1}$ and $C^{-1}$ are understood to be 
difference operators of the form 
\beq
\begin{gathered}
  B^{-1} = - \sum_{k=0}^\infty (b^{-1}e^{\rd_s})^k b^{-1} 
    = - b(s)^{-1} - \sum_{k=1}^\infty b(s)^{-1}\cdots b(s+k)^{-1}e^{k\rd_s},\\
  C^{-1} = 1 + \sum_{k=1}^\infty (ce^{-\rd_s})^k
    = 1 + \sum_{k=1}^\infty c(s)c(s-1)\cdots c(s-k+1)e^{-k\rd_s}, 
\end{gathered}
\label{BC-inverse}
\eeq
where $b(s)$ and $c(s)$ are abbreviations of 
$c(s,\bst,\bar{\bst})$ and $c(s,\bst,\bar{\bst})$.  
$B$ and $C$ have inverses of different types, 
\begin{gather*}
  B^{-1} = \sum_{k=0}^\infty e^{-\rd_s}(be^{-\rd_s})^k 
    = e^{-\rd_s} + \sum_{k=1}^\infty b(s-1)\cdots b(s-k+1)e^{-k\rd_s},\\
  C^{-1} = - \sum_{k=1}^\infty (e^{\rd_s}c^{-1})^k 
    = - \sum_{k=1}^\infty c(s+1)^{-1}\cdots c(s+k)^{-1}e^{k\rd_s}, 
\end{gather*}
but they are not used here.  (\ref{BC-inverse}) imply 
that $(CB^{-1})^{-1} \not= BC^{-1}$.  Thus we can avoid 
the trivial case where $L = \bar{L} = e^{\rd_s}$.  
Under the reduction condition (\ref{AL-LLbar}), 
the Lax equations (\ref{2DToda-Laxeq}) can be reduced 
to the following equations: 
\beq
\begin{gathered}
  \frac{\rd B}{\rd t_k} 
  = \left((BC^{-1})^k\right)_{\ge 0}B - B\left((C^{-1}B)^k\right)_{\ge 0},\\
  \frac{\rd C}{\rd t_k}
  = \left((BC^{-1})^k\right)_{\ge 0}C - C\left((C^{-1}B)^k\right)_{\ge 0},\\
  \frac{\rd B}{\rd\bar{t}_k}
  = \left((CB^{-1})^k\right)_{<0}B - B\left((B^{-1}C)^k\right)_{<0},\\
  \frac{\rd C}{\rd\bar{t}_k}
  = \left((CB^{-1})^k\right)_{<0}C - C\left((B^{-1}C)^k\right)_{<0}. 
\end{gathered}
\label{BC-Laxeq}
\eeq
These equations have another expression of the following form: 
\beqnn
\begin{gathered}
  \frac{\rd B}{\rd t_k} 
  = - \left((BC^{-1})^k\right)_{<0}B + B\left((C^{-1}B)^k\right)_{<0},\\
  \frac{\rd C}{\rd t_k}
  = - \left((BC^{-1})^k\right)_{<0}C + C\left((C^{-1}B)^k\right)_{<0},\\
  \frac{\rd B}{\rd\bar{t}_k}
  = - \left((CB^{-1})^k\right)_{\ge 0}B + B\left((B^{-1}C)^k\right)_{\ge 0},\\
  \frac{\rd C}{\rd\bar{t}_k}
  = - \left((CB^{-1})^k\right)_{\ge 0}C + C\left((B^{-1}C)^k\right)_{\ge 0}.
\end{gathered}
\eeqnn
We can deduce from these two expressions of (\ref{BC-Laxeq}) 
that the right hand side are difference operators 
of the form $f_k$, $g_ke^{-\rd_s}$, 
$\bar{f}_k$ and $\bar{g}_ke^{-\rd_s}$.  
Thus (\ref{BC-Laxeq}) can be reduced to evolution equations 
of the form 
\beq
  \frac{\rd b}{\rd t_k} = f_k, \quad 
  \frac{\rd c}{\rd t_k} = g_k, \quad 
  \frac{\rd b}{\rd\bar{t}_k} = \bar{f}_k, \quad 
  \frac{\rd c}{\rd\bar{t}_k} = \bar{g}_k. 
\label{bc-eeq}
\eeq
This implies that the factorized form (\ref{AL-LLbar}) 
of the Lax operators is preserved by the flows 
of the 2D Toda hierarchy.  
(\ref{bc-eeq}) can be further converted 
to the usual Hamiltonian form \cite{Suris97} 
of the Ablowitz-Ladik hierarchy 
by a change of variables \cite{BCR11}.

\subsubsection*{Another formulation of reduction to Ablowitz-Ladik hierarchy}

The Lax operators of (\ref{AL-LLbar}) have another expression 
\beq
  L = \tilde{C}^{-1}\tilde{B},\quad 
  \bar{L}^{-1} = \tilde{B}^{-1}\tilde{C}, 
\label{AL-LLbar-tilde}
\eeq
where $\tilde{B}$ and $\tilde{C}$ are difference operators 
of the form 
\beqnn
  \tilde{B} = e^{\rd_s} - \tilde{b},\quad 
  \tilde{C} = 1 - \tilde{c}e^{-\rd_s} 
\eeqnn
that satisfies the relation 
\beq
  \tilde{C}B = \tilde{B}C. 
\label{BC-BCtilde}
\eeq
$\tilde{B}^{-1}$ and $\tilde{C}^{-1}$ are understood 
to be defined by series expansions like (\ref{BC-inverse}).   
The operator relation (\ref{BC-BCtilde}) 
turns into the functional relations 
\beq
  b(s) + \tilde{c}(s) = \tilde{b}(s) + c(s+1), \quad 
  \tilde{c}(s)b(s-1) = \tilde{b}(s)c(s). 
\eeq
$b,c$ and $\tilde{b},\tilde{c}$ are thus mutually connected 
by an invertible transformation.  
The Lax equations (\ref{2DToda-Laxeq}) can be reduced 
to the equations 
\beq
\begin{gathered}
  \frac{\rd\tilde{B}}{\rd t_k} 
  = \left((\tilde{B}\tilde{C}^{-1})^k\right)_{\ge 0}\tilde{B} 
    - \tilde{B}\left(\tilde{C}^{-1}\tilde{B})^k\right)_{\ge 0},\\
  \frac{\rd\tilde{C}}{\rd t_k}
  = \left((\tilde{B}\tilde{C}^{-1})^k\right)_{\ge 0}\tilde{C} 
    - \tilde{C}\left((\tilde{C}^{-1}\tilde{B})^k\right)_{\ge 0},\\
  \frac{\rd\tilde{B}}{\rd\bar{t}_k}
  = \left((\tilde{C}\tilde{B}^{-1})^k\right)_{<0}\tilde{B} 
    - \tilde{B}\left((\tilde{B}^{-1}\tilde{C})^k\right)_{<0},\\
  \frac{\rd\tilde{C}}{\rd\bar{t}_k}
  = \left((\tilde{C}\tilde{B}^{-1})^k\right)_{<0}\tilde{C} 
    - \tilde{C}\left((\tilde{B}^{-1}\tilde{C})^k\right)_{<0} 
\end{gathered}
\label{BCtilde-Laxeq}
\eeq
for $\tilde{B}$ and $\tilde{C}$.  
These equations can be further converted 
to evolution equations for $\tilde{b}$ and $\tilde{c}$.  

The factorized form (\ref{AL-LLbar-tilde}) is suited 
for a comparison with the relativistic Toda hierarchy 
\cite{Ruijsenaars90}.  If the Lax operators are factorized 
in this form, the eigenvalue problem 
\beq
  L\Psi = z\Psi, \quad \bar{L}^{-1}\bar{\Psi} = z^{-1}\bar{\Psi}
\eeq
for the Baker-Akhiezer functions $\Psi$ and $\bar{\Psi}$ 
of the 2D Toda hierarchy \cite{UT84,TT95} can be converted 
to a ``generalized eigenvalue problem'' of the form 
\beq
  \tilde{B}\Psi = z\tilde{C}\Psi, \quad 
  \tilde{B}\bar{\Psi} = z\tilde{C}\bar{\Psi}. 
\eeq
It is this generalized eigenvalue problem 
that is used in Bruschi and Ragnisco's 
scalar-valued Lax formalism \cite{BR89} 
of the relativistic Toda lattice.  
As pointed out by Kharchev et al. \cite{KMZ96}, 
this generalized eigenvalue problem can be derived 
from the traditional $2\times 2$ matrix-valued Lax formalism 
\cite{AL75} of the Ablowitz-Ladik hierarchy as well.

\subsection{Factorization problem of infinite matrices}

Difference operators in the Lax formalism 
of the 2D Toda hierarchy are in one-to-one correspondence 
with $\ZZ\times\ZZ$ matrices by the rule 
\beqnn
  \sum_{k=-\infty}^\infty a_k(s)e^{k\rd_s} 
  \;\longleftrightarrow\; 
  \sum_{k=-\infty}^\infty \diag[a_k(s)]\Lambda^k, 
\eeqnn
where $\diag[a(s)]$ stands for the diagonal matrix 
\beqnn
  \diag[a(s)] = \sum_{i,j\in\ZZ}a(i)E_{ii}. 
\eeqnn
Since this correspondence preserves the algebraic structure, 
one can reformulate the Lax formalism of the 2D Toda hierarchy 
in terms of $\ZZ\times\ZZ$ matrices.   Let us use 
the same notations $L$, $\bar{L}$, $B_n$, $\bar{B}_n$ etc. 
for the corresponding $\ZZ\times\ZZ$ matrices.  

In this matrix formulation, one can formulate 
a matrix factorization problem that can capture 
all solutions of the 2D Toda hierarchy 
\cite{Takasaki84,NTT95,Takasaki95}.  
Given a constant invertible $\ZZ\times\ZZ$ matrix $U$, 
the problem is to find two $\ZZ\times\ZZ$ matrices 
$W = W(\bst,\bar{\bst})$ and 
$\bar{W} = \bar{W}(\bst,\bar{\bst})$ 
with the following properties: 
\begin{itemize}
\item[(i)] $W$ is a lower triangular matrix, 
and all diagonal elements are equal to $1$. 
\item[(ii)] $\bar{W}$ is an upper triangular matrix, 
and all diagonal elements are non-zero. 
\item[(iii)] $W$ and $\bar{W}$ satisfy the factorization relation
\beq
  \exp\left(\sum_{k=1}^\infty t_k\Lambda^k\right)
    U \exp\left(- \sum_{k=1}^\infty\bar{t}_k\Lambda^{-k}\right) 
  = W^{-1}\bar{W}. 
\label{MFP}
\eeq
\end{itemize}
If such matrices $W,\bar{W}$ exist, 
they satisfy the so called ``Sato equations'' 
\beq
\begin{aligned}
  \frac{\rd W}{\rd t_k} = B_kW - W\Lambda^k,\quad& 
  \frac{\rd\bar{W}}{\rd t_k} = B_k\bar{W},\\
  \frac{\rd W}{\rd\bar{t}_k} = \bar{B}_kW,\quad& 
  \frac{\rd\bar{W}}{\rd\bar{t}_k} 
    = \bar{B}_k\bar{W} - \bar{W}\Lambda^{-k}, 
\end{aligned}
\label{Sato-eq}
\eeq
where $B_k$ and $\bar{B}_k$ are constructed 
from $W$ and $\bar{W}$ as 
\beqnn
  B_k = \left(W\Lambda^kW^{-1}\right)_{\ge 0},\quad 
  \bar{B}_k = \left(\bar{W}\Lambda^{-k}\bar{W}^{-1}\right)_{<0}. 
\eeqnn
$(\quad)_{\ge 0}$ and $(\quad)_{<0}$
stand for the upper triangular part and 
the strictly lower triangular part.  
The Sato equations (\ref{Sato-eq}) imply 
that the matrices  $L$ and $\bar{L}$ defined by 
\beq
  L = W\Lambda W^{-1}, \quad 
  \bar{L} = \bar{W}\Lambda^{-1}\bar{W}^{-1} 
\label{LLbar-WWbar}
\eeq
satisfy the Lax equations (\ref{2DToda-Laxeq}).  

The factorization problem (\ref{MFP}) is directly connected 
with the fermionic expression \cite{Takebe91}
\beqnn
  \tau(s,\bst,\bar{\bst}) 
  = \langle s|\exp\left(\sum_{k=1}^\infty t_kJ_k\right)
    g \exp\left(- \sum_{k=1}^\infty\bar{t}_kJ_{-k}\right)
    |s\rangle
\eeqnn
of tau functions.  Namely, the $\ZZ\times\ZZ$ matrix 
$U = (u_{ij})$ corresponds to the operator $g$ 
by the Bogoliubov transformation 
\beq
  g\psi_jg^{-1} = \sum_{i\in\ZZ}\psi_iu_{ij} 
\eeq
of fermion creation-annihilation operators \cite{MJD-book}.  
Its infinitesimal form is exactly 
the correspondence (\ref{Xhat-def}) 
between $\ZZ\times\ZZ$ matrices and fermion bilinears. 

The operators (\ref{crystal-g}) and (\ref{crystal-g'}) 
defining the tau functions $\tau(s,\bst)$ and 
$\tau'(s,\bst,\bar{\bst})$ correspond to the matrices 
\beq
  U = q^{\Delta^2/2}\Gamma_{-}(q^{-\rho})\Gamma_{+}(q^{-\rho})Q^\Delta
      \Gamma_{-}(q^{-\rho})\Gamma_{+}(q^{-\rho})q^{\Delta^2/2}
\label{crystal-U}
\eeq
and 
\beq
  U' = q^{\Delta^2/2}\Gamma_{-}(q^{-\rho})\Gamma_{+}(q^{-\rho})
       Q^\Delta\Gamma'_{-}(q^{-\rho})\Gamma'_{+}(q^{-\rho})q^{-\Delta^2/2}, 
\label{crystal-U'}
\eeq
where $\Gamma_{\pm}(\bsx)$ and $\Gamma'_{\pm}(\bsx)$ 
are the matrix representation (\ref{GG'-matrix}) 
of the multi-variable vertex operators.  
We now consider the factorization problem (\ref{MFP}) 
in these particular cases.

\subsection{Initial values of Lax operators}

When $\bst$ and $\bar{\bst}$ are specialized to 
$\bst = \bar{\bst} = \bszero$, we can find 
an explicit form of the solution of 
the factorization problem (\ref{MFP}) 
for the matrix (\ref{crystal-U'}): 

\begin{lemma}
The factorization problem 
\beq
  U' = W^{-1}\bar{W}
\label{MFP-U'}
\eeq
for the matrix (\ref{crystal-U'}) can be solved as 
\beq
\begin{gathered}
  W = q^{\Delta^2/2}\Gamma'_{-}(Qq^{-\rho})^{-1}
      \Gamma_{-}(q^{-\rho})^{-1}q^{-\Delta^2/2},\\
  \bar{W} = q^{\Delta^2/2}Q^\Delta\Gamma_{+}(Qq^{-\rho})
            \Gamma'_{+}(q^{-\rho})q^{-\Delta^2/2}. 
\end{gathered}
\label{WWbar-U'}
\eeq
\end{lemma}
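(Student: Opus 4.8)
The plan is to verify directly the three defining properties of the factorization problem (\ref{MFP}) specialized to $\bst=\bar{\bst}=\bszero$, where it reduces to (\ref{MFP-U'}): that the proposed $W$ in (\ref{WWbar-U'}) is lower triangular with unit diagonal, that $\bar{W}$ is upper triangular with nonzero diagonal, and that $W^{-1}\bar{W}=U'$. The factorization identity is the substantive point; once it is in hand, the triangularity conditions (i) and (ii) can be read off almost for free.

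For the factorization identity I would first record two conjugation formulas that absorb the parameter $Q$. Since the relation $\Lambda q^{\Delta}=q\,q^{\Delta}\Lambda$ of (\ref{Lamda-qDelta-rel}) holds verbatim with $q$ replaced by $Q$, one gets $Q^{\Delta}\Lambda Q^{-\Delta}=Q^{-1}\Lambda$ and hence $Q^{\Delta}\Lambda^{-1}Q^{-\Delta}=Q\Lambda^{-1}$. Applying these factor-by-factor to the infinite products (\ref{GG'-matrix}) yields
\[
  Q^{\Delta}\Gamma_{+}(Qq^{-\rho})Q^{-\Delta}=\Gamma_{+}(q^{-\rho}),\qquad
  Q^{\Delta}\Gamma'_{-}(q^{-\rho})Q^{-\Delta}=\Gamma'_{-}(Qq^{-\rho}),
\]
the point being that the power of $Q$ produced by conjugation exactly cancels (resp.\ supplies) the $Q$ in the argument $Qq^{-\rho}$. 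The second ingredient is that $\Gamma_{+}(q^{-\rho})$ and $\Gamma'_{-}(Qq^{-\rho})$ commute: by (\ref{GG'-matrix}) both lie in the commutative algebra of (completed) Laurent series in the single shift matrix $\Lambda$, the former involving only positive powers and the latter only negative powers. Combining the commutativity with the conjugation formulas, I can rewrite the central string of factors in (\ref{crystal-U'}) as
\[
  \Gamma_{+}(q^{-\rho})Q^{\Delta}\Gamma'_{-}(q^{-\rho})
  =\Gamma_{+}(q^{-\rho})\Gamma'_{-}(Qq^{-\rho})Q^{\Delta}
  =\Gamma'_{-}(Qq^{-\rho})Q^{\Delta}\Gamma_{+}(Qq^{-\rho}).
\]
Substituting this into (\ref{crystal-U'}) and regrouping, the two lower-triangular factors $\Gamma_{-}(q^{-\rho})\Gamma'_{-}(Qq^{-\rho})$ collect to the left of $Q^{\Delta}$ and the two upper-triangular factors $\Gamma_{+}(Qq^{-\rho})\Gamma'_{+}(q^{-\rho})$ to its right; conjugating the whole expression by $q^{\Delta^{2}/2}$ reproduces exactly $W^{-1}\bar{W}$ with $W,\bar{W}$ as in (\ref{WWbar-U'}), establishing (iii).

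It then remains to check (i) and (ii). By (\ref{GG'-matrix}) the factors carrying the subscript ``$-$'' are lower triangular and those carrying ``$+$'' are upper triangular, and $\Gamma_{\pm},\Gamma'_{\pm}$ all have unit diagonal, while $Q^{\Delta}$ is diagonal with entries $Q^{i}\neq 0$. Since conjugation by the diagonal matrix $q^{\Delta^{2}/2}$ preserves triangularity and fixes every diagonal entry, $W$ is lower triangular with all diagonal entries equal to $1$, and $\bar{W}$ is upper triangular with diagonal entries $Q^{i}\neq 0$, as required by the factorization problem.

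The algebra itself is short; the care goes into its justification, and that is where I expect the main obstacle to lie. It is not a computational difficulty but a formal/analytic one: legitimizing the term-by-term manipulation of the infinite products (the matrix-valued quantum dilogarithms (\ref{GG'-matrix})) and the conjugation by the unbounded diagonal matrix $q^{\Delta^{2}/2}$, whose entries $q^{i^{2}/2}$ and inverses $q^{-i^{2}/2}$ diverge as $|i|\to\infty$. One must confirm that every matrix entering the argument has entries decaying fast enough (thanks to $|q|<1$) for the products, inverses and conjugations to be well defined and associative, so that the formal identities above are genuine $\ZZ\times\ZZ$ matrix identities and so that $W,\bar{W}$ genuinely belong to the admissible triangular classes of the factorization problem.
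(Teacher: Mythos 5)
Your proposal is correct and follows essentially the same route as the paper: it uses the conjugation relations $Q^{\Delta}\Lambda^{\pm 1}Q^{-\Delta}=Q^{\mp 1}\Lambda^{\pm 1}$ to push $Q^{\Delta}$ through the vertex-operator matrices, invokes the commutativity of matrices that are Laurent series in $\Lambda$ to rearrange $\Gamma_{+}(q^{-\rho})Q^{\Delta}\Gamma'_{-}(q^{-\rho})$ into $\Gamma'_{-}(Qq^{-\rho})Q^{\Delta}\Gamma_{+}(Qq^{-\rho})$, and then regroups $U'$ into a lower-triangular factor times an upper-triangular factor, with the conjugation by the diagonal matrix $q^{\Delta^{2}/2}$ normalizing the diagonal of $W$ to $1$. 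The only difference is your closing discussion of convergence of the infinite products and entry-wise sums, which the paper leaves implicit; this is a sound addition but not a different method.
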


\begin{proof}
We can use the algebraic relations 
\beqnn
  Q^\Delta\Lambda^n Q^{-\Delta} = Q^{-n}\Lambda^n,\quad 
  Q^{-\Delta}\Lambda^n Q^\Delta = Q^n\Lambda^n 
\eeqnn
to rewrite the triple product in the middle 
of (\ref{crystal-U'}) as 
\begin{align*}
  \Gamma_{+}(q^{-\rho})Q^\Delta\Gamma'_{-}(q^{-\rho})
  &= Q^\Delta\Gamma_{+}(Qq^{-\rho})\Gamma'_{-}(q^{-\rho}) \nonumber\\
  &= Q^\Delta\Gamma'_{-}(q^{-\rho})\Gamma_{+}(Qq^{-\rho}) \nonumber\\
  &= \Gamma'_{-}(Qq^{-\rho})Q^\Delta\Gamma_{+}(Qq^{-\rho}). 
\end{align*}
Thus $U'$ becomes a product of triangular matrices: 
\beqnn
  U' = q^{\Delta^2/2}\Gamma_{-}(q^{-\rho})\Gamma'_{-}(Qq^{-\rho})
  \cdot Q^\Delta\Gamma_{+}(Qq^{-\rho})\Gamma'_{+}(q^{-\rho})q^{-\Delta^2/2}. 
\eeqnn
We have only to modify the first (lower triangular) part 
with the diagonal matrix $q^{-\Delta^2/2}$ so that 
all diagonal elements become $1$.  
\end{proof}

The initial values of the associated Lax operators 
(\ref{LLbar-WWbar}) at $\bst = \bar{\bst} = \bszero$ 
are obtained from (\ref{WWbar-U'}) as 
\beqnn
  L = W\Lambda W^{-1},\quad 
  \bar{L}^{-1} = \bar{W}\Lambda^{-1}\bar{W}^{-1}. 
\eeqnn
These matrices turn out to be a quotient of two matrices: 

\begin{lemma}
\beq
  L = (\Lambda - q^\Delta)(1 + Qq^{\Delta-1}\Lambda^{-1})^{-1}
    = (1 + Qq^\Delta\Lambda^{-1})^{-1}(\Lambda - q^\Delta). 
\label{L-factorized}
\eeq
\end{lemma}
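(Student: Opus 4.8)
The plan is to compute $L = W\Lambda W^{-1}$ directly from the explicit solution $W$ in (\ref{WWbar-U'}), together with $L = W\Lambda W^{-1}$ from (\ref{LLbar-WWbar}), exploiting the fact that every factor of $W$ other than the outer $q^{\pm\Delta^2/2}$ is built from $\Lambda^{-1}$ alone. First I would strip off the conjugation by $q^{\Delta^2/2}$: setting $V = \Gamma'_{-}(Qq^{-\rho})^{-1}\Gamma_{-}(q^{-\rho})^{-1}$ one has $W = q^{\Delta^2/2}Vq^{-\Delta^2/2}$, so that $L = q^{\Delta^2/2}V\tilde\Lambda V^{-1}q^{-\Delta^2/2}$ with $\tilde\Lambda = q^{-\Delta^2/2}\Lambda q^{\Delta^2/2}$. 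By the matrix shift symmetry (\ref{SSiii-matrix}), read as $q^{-\Delta^2/2}V^{(0)}_1q^{\Delta^2/2} = V^{(1)}_1$, this conjugate is $\tilde\Lambda = q^{-1/2}\Lambda q^\Delta$. The problem is thereby reduced to evaluating $V\tilde\Lambda V^{-1}$ and then applying the single automorphism given by conjugation by $q^{\Delta^2/2}$.

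The key observation is that, by (\ref{GG'-matrix}), $V$ is a function of $\Lambda^{-1}$ only, namely $V = F(\Lambda^{-1})$ with $F(\zeta) = \prod_{i=1}^\infty(1 - q^{i-1/2}\zeta)(1 + Qq^{i-1/2}\zeta)^{-1}$; hence $V$ commutes with $\Lambda$, and the entire nontriviality resides in the $q^\Delta$ buried in $\tilde\Lambda$. I would push $V$ past $q^\Delta$ using the exponentiated commutation relation (\ref{qDelta-Lambda-rel}), which acts as $q^\Delta F(\Lambda^{-1})q^{-\Delta} = F(q\Lambda^{-1})$, to obtain $V\tilde\Lambda V^{-1} = q^{-1/2}\Lambda\, F(\Lambda^{-1})F(q\Lambda^{-1})^{-1}\,q^\Delta$. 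The computational heart is then the ratio $F(\zeta)/F(q\zeta)$: the shift $\zeta\mapsto q\zeta$ makes both the numerator product and the $Q$-denominator product telescope, leaving only the boundary ($i=1$) factors, so that $F(\zeta)/F(q\zeta) = (1 - q^{1/2}\zeta)(1 + Qq^{1/2}\zeta)^{-1}$. This telescoping, with the half-integer powers of $q$ kept straight, is the step most prone to bookkeeping errors, and I would carry it out term by term; I regard it as the main (indeed the only) real obstacle, the rest being commutator arithmetic.

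Finally I would conjugate back by $q^{\Delta^2/2}$, which (again by (\ref{SSiii-matrix})) is the algebra automorphism $\Lambda\mapsto q^{1/2}\Lambda q^{-\Delta}$, $q^\Delta\mapsto q^\Delta$; applied to $q^{-1/2}\Lambda(1 + Qq^{1/2}\Lambda^{-1})^{-1}(1 - q^{1/2}\Lambda^{-1})q^\Delta$ it produces $L = \Lambda q^{-\Delta}(1 + Qq^\Delta\Lambda^{-1})^{-1}(1 - q^\Delta\Lambda^{-1})q^\Delta$. The remaining task is purely the relation (\ref{Lamda-qDelta-rel}): conjugating the middle block by $q^{\mp\Delta}$ sends $q^\Delta\Lambda^{-1}$ to $q^{\Delta-1}\Lambda^{-1}$, giving $L = \Lambda(1 + Qq^{\Delta-1}\Lambda^{-1})^{-1}(1 - q^{\Delta-1}\Lambda^{-1})$; then the identities $\Lambda(1 - q^{\Delta-1}\Lambda^{-1}) = \Lambda - q^\Delta$ and $\Lambda(1 + Qq^{\Delta-1}\Lambda^{-1})^{-1} = (1 + Qq^\Delta\Lambda^{-1})^{-1}\Lambda$ deliver the second factorized form in (\ref{L-factorized}). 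Commuting the two middle factors, which are both functions of $q^{\Delta-1}\Lambda^{-1}$ and hence commute, yields the first form; their equivalence is in any case a one-line check using (\ref{Lamda-qDelta-rel}).
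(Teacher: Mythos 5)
Your proof is correct and follows essentially the same route as the paper's: both strip off the conjugation by $q^{\Delta^2/2}$ using (\ref{qDelta^2/2-Lambda-rel}), reduce everything to conjugating $q^\Delta$ by the lower-triangular factors (your telescoping of $F(\zeta)/F(q\zeta)$ is precisely the paper's two triple-product computations with $\Gamma_{-}(q^{-\rho})$ and $\Gamma'_{-}(Qq^{-\rho})$ carried out at once), and finish with the same commutation bookkeeping based on (\ref{Lamda-qDelta-rel}). The only cosmetic difference is that you extract both factorized forms of (\ref{L-factorized}) from a single expression via the exchange identity $\Lambda(1+Qq^{\Delta-1}\Lambda^{-1})^{-1} = (1+Qq^{\Delta}\Lambda^{-1})^{-1}\Lambda$, whereas the paper repeats the calculation for the second form.
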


\begin{proof}
By construction, $L$ is a matrix of the form 
\beqnn
  L = q^{\Delta^2/2}\Gamma'_{-}(Qq^{-\rho})^{-1}\Gamma_{-}(q^{-\rho})^{-1}
      q^{-\Delta^2/2}\Lambda q^{\Delta^2/2}
      \Gamma_{-}(q^{-\rho})\Gamma'_{-}(Qq^{-\rho})q^{-\Delta^2/2}. 
\eeqnn
Let us calculate this matrix step-by-step.  
Note that specialization of (\ref{SSiii-matrix}) 
to $k = 0$ reads 
\beq
  q^{\Delta^2/2}\Lambda^mq^{-\Delta^2/2} 
  = q^{m^2/2}\Lambda^mq^{-m\Delta}. 
\label{qDelta^2/2-Lambda-rel}
\eeq
This implies that 
\beqnn
  q^{-\Delta^2/2}\Lambda q^{\Delta^2/2} = \Lambda q^{\Delta-1/2}. 
\eeqnn
The next stage is to calculate the product 
\begin{align*}
  &\Gamma'_{-}(Qq^{-\rho})^{-1}\Gamma_{-}(q^{-\rho})^{-1}
      q^{-\Delta^2/2}\Lambda q^{\Delta^2/2}
      \Gamma_{-}(q^{-\rho})\Gamma'_{-}(Qq^{-\rho}) \nonumber\\
  &= \Lambda \Gamma'_{-}(Qq^{-\rho})^{-1}\Gamma_{-}
    (q^{-\rho})^{-1}q^{\Delta-1/2}\Gamma_{-}(q^{-\rho})
    \Gamma'_{-}(Qq^{-\rho}) 
\end{align*}
with the aid of (\ref{qDelta-Lambda-rel}).  
Firstly, the triple product of $\Gamma_{-}(q^{-\rho})^{\pm 1}$ 
and $q^\Delta$ can be calculated as follows: 
\begin{align*}
  \Gamma_{-}(q^{-\rho})^{-1}q^\Delta\Gamma_{-}(q^{-\rho}) 
  &= \Gamma_{-}(q^{-\rho})^{-1}\cdot 
     q^\Delta\Gamma_{-}(q^{-\rho})q^{-\Delta}\cdot q^\Delta 
  \nonumber\\
  &= \prod_{i=1}^\infty(1 - q^{i-1/2}\Lambda^{-1})\cdot
    \prod_{i=1}^\infty(1 - q^{i+1/2}\Lambda^{-1})^{-1}\cdot q^\Delta 
  \nonumber\\
  &= (1 - q^{1/2}\Lambda^{-1})q^\Delta. 
\end{align*}
Since $1 - q^{1/2}\Lambda^{-1}$ commutes 
with $\Gamma'_{-}(Qq^{-\rho})^{\pm 1}$, 
the next task is to calculate a triple product 
of $\Gamma'_{-}(Qq^{-\rho})^{\pm 1}$ and $q^\Delta$.  
This can be achieved as follows: 
\begin{align*}
  \Gamma'_{-}(Qq^{-\rho})^{-1}q^\Delta\Gamma'_{-}(Qq^{-\rho})
  &= \Gamma'_{-}(Qq^{-\rho})^{-1}\cdot 
     q^\Delta\Gamma'_{-}(Qq^{-\rho})q^{-\Delta}\cdot q^\Delta
  \nonumber\\
  &= \prod_{i=1}^\infty(1+Qq^{i-1/2}\Lambda^{-1})^{-1}\cdot
     \prod_{i=1}^\infty(1+Qq^{i+1/2}\Lambda^{-1})\cdot q^\Delta
  \nonumber\\
  &= (1+Qq^{1/2}\Lambda^{-1})^{-1}q^\Delta.
\end{align*}
These calculations yield the equality 
\beqnn
\begin{aligned}
  \Gamma'_{-}(Qq^{-\rho})^{-1}\Gamma_{-}(q^{-\rho})^{-1}
  q^{-\Delta^2/2}\Lambda q^{\Delta^2/2}
  \Gamma_{-}(q^{-\rho})\Gamma'_{-}(Qq^{-\rho})\\
  = \Lambda(1 - q^{1/2}\Lambda^{-1})
    (1+Qq^{1/2}\Lambda^{-1})^{-1} q^{\Delta-1/2}, 
\end{aligned}
\eeqnn
hence an expression of $L$ of the form 
\beqnn
  L = q^{\Delta^2/2}(\Lambda - q^{1/2})
      (1+Qq^{1/2}\Lambda^{-1})q^{\Delta-1/2} q^{-\Delta^2/2}. 
\eeqnn
We now use the algebraic relations 
\beqnn
\begin{gathered}
  q^{\Delta^2/2}\Lambda q^{-\Delta^2/2} = q^{-\Delta-1/2}\Lambda,\quad 
  q^{\Delta^2/2}\Lambda^{-1}q^{-\Delta^2/2} = q^{\Delta-1/2}\Lambda^{-1},\\
  \Lambda q^\Delta = q^{\Delta+1}\Lambda,\quad 
  \Lambda^{-1}q^\Delta = q^{\Delta-1}\Lambda^{-1} 
\end{gathered}
\eeqnn
that follow from (\ref{qDelta^2/2-Lambda-rel}) and 
(\ref{qDelta-Lambda-rel}) to rewrite $L$ further 
as follows: 
\begin{align*}
  L &= (q^{-\Delta-1/2}\Lambda - q^{1/2})
       (1 + Qq^\Delta\Lambda^{-1})q^{\Delta-1/2} \nonumber\\
  &= q^{\Delta-1/2}(q^{-\Delta+1/2}\Lambda - q^{1/2})
  (1 + Qq^{\Delta-1}\Lambda^{-1})^{-1} \nonumber\\
  &= (\Lambda - q^\Delta)(1 + Qq^{\Delta-1}\Lambda^{-1})^{-1}. 
\end{align*}
This gives the first part of (\ref{L-factorized}). 
In the same way, starting from 
\beqnn
  L = q^{\Delta^2/2}(1 + Qq^{1/2}\Lambda^{-1})^{-1} 
      (\Lambda - q^{1/2})q^{\Delta-1/2}q^{-\Delta^2/2}
\eeqnn
and repeating similar calculations, one can derive 
the second part of (\ref{L-factorized}). 
\end{proof}

\begin{lemma}
\beq
  \bar{L}^{-1}
  = (1 + Qq^{\Delta-1}\Lambda^{-1})(q^\Delta - \Lambda)^{-1}
  =  (q^\Delta - \Lambda)^{-1}(1 + Qq^\Delta\Lambda^{-1}). 
\label{Lbar-factorized}
\eeq
\end{lemma}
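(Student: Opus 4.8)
The plan is to compute $\bar{L}^{-1}=\bar{W}\Lambda^{-1}\bar{W}^{-1}$ directly, mirroring the derivation of the factorized form of $L$ just established, and using the explicit expression for $\bar{W}$ from (\ref{WWbar-U'}) at $\bst=\bar{\bst}=\bszero$. Writing $\bar{W}=q^{\Delta^2/2}Q^\Delta A\,q^{-\Delta^2/2}$ with $A=\Gamma_{+}(Qq^{-\rho})\Gamma'_{+}(q^{-\rho})$, one has
\[
  \bar{L}^{-1}
  = q^{\Delta^2/2}Q^\Delta A\,(q^{-\Delta^2/2}\Lambda^{-1}q^{\Delta^2/2})\,A^{-1}Q^{-\Delta}q^{-\Delta^2/2}.
\]
The innermost factor is handled by (\ref{qDelta^2/2-Lambda-rel}) (the $k=0$ case of (\ref{SSiii-matrix})), which gives $q^{-\Delta^2/2}\Lambda^{-1}q^{\Delta^2/2}=\Lambda^{-1}q^{-\Delta-1/2}$. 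Since $\Lambda^{-1}$ commutes with any series in $\Lambda$, it can be pulled to the left of $A$, reducing the task to evaluating the triple product $A\,q^{-\Delta}A^{-1}$.

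I would compute $A\,q^{-\Delta}A^{-1}$ exactly as in the proof of Theorem~3. Conjugating $q^{-\Delta}$ first through $\Gamma'_{+}(q^{-\rho})$ and then through $\Gamma_{+}(Qq^{-\rho})$ with the aid of (\ref{qDelta-Lambda-rel}), the infinite products telescope and collapse to a single factor each, yielding
\[
  A\,q^{-\Delta}A^{-1}
  = (1+q^{1/2}\Lambda)(1-Qq^{1/2}\Lambda)^{-1}q^{-\Delta}.
\]
Note that $Q$ enters only through the argument of $\Gamma_{+}(Qq^{-\rho})$ and through the outer $Q^{\Delta}$, so the two telescoped factors carry $Q$ asymmetrically; keeping track of this asymmetry is precisely where the computation departs from the $L$ case.

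Next I would propagate the two remaining conjugations. The relation $Q^\Delta\Lambda^nQ^{-\Delta}=Q^{-n}\Lambda^n$ sends $\Lambda\mapsto Q^{-1}\Lambda$ and $\Lambda^{-1}\mapsto Q\Lambda^{-1}$ (so $1-Qq^{1/2}\Lambda$ becomes $1-q^{1/2}\Lambda$), and the final conjugation by $q^{\pm\Delta^2/2}$, again via (\ref{qDelta^2/2-Lambda-rel}), sends $\Lambda^n\mapsto q^{n^2/2}\Lambda^nq^{-n\Delta}$. After these substitutions, and after using $\Lambda^{\pm1}q^\Delta=q^{\Delta\pm1}\Lambda^{\pm1}$ to place the diagonal factors, the expression becomes
\[
  \bar{L}^{-1}
  = Q\,q^{\Delta-1}\Lambda^{-1}(1+Q^{-1}q^{-\Delta}\Lambda)(1-q^{-\Delta}\Lambda)^{-1}q^{-\Delta}.
\]

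The decisive simplification is that the cross term collapses: since $\Lambda^{-1}q^{-\Delta}\Lambda=q^{1-\Delta}$, one finds $Q\,q^{\Delta-1}\Lambda^{-1}(1+Q^{-1}q^{-\Delta}\Lambda)=1+Qq^{\Delta-1}\Lambda^{-1}$, so the clumsy prefactor turns into the clean numerator. Recognizing $(1-q^{-\Delta}\Lambda)^{-1}q^{-\Delta}=(q^\Delta-\Lambda)^{-1}$ then yields the first equality of (\ref{Lbar-factorized}). The second equality follows by the reordering trick used for (\ref{L-factorized}): one verifies the operator identity $(q^\Delta-\Lambda)(1+Qq^{\Delta-1}\Lambda^{-1})=(1+Qq^\Delta\Lambda^{-1})(q^\Delta-\Lambda)$ and conjugates. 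I expect the main obstacle to be not any single manipulation but the sustained bookkeeping of the intertwined $q$- and $Q$-powers through the three nested conjugations, together with spotting the cross-term collapse that produces the clean numerator.
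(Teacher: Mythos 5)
Your proof is correct and takes essentially the same route as the paper's: both compute $\bar{L}^{-1}=\bar{W}\Lambda^{-1}\bar{W}^{-1}$ directly from (\ref{WWbar-U'}), telescope the infinite products of vertex-operator matrices via (\ref{qDelta-Lambda-rel}) to get the single-factor form $(1+q^{1/2}\Lambda)(1-Qq^{1/2}\Lambda)^{-1}q^{-\Delta}$, and then push the remaining conjugations by $Q^{\Delta}$ and $q^{\pm\Delta^2/2}$ through; the paper stops at the intermediate expression $q^{\Delta^2/2}Q^\Delta(1+q^{-1/2}\Lambda^{-1})(1-Qq^{1/2}\Lambda)^{-1}q^{-\Delta}Q^{-\Delta}q^{-\Delta^2/2}$ and calls the rest ``easy,'' which your cross-term collapse and the identity $(q^\Delta-\Lambda)(1+Qq^{\Delta-1}\Lambda^{-1})=(1+Qq^\Delta\Lambda^{-1})(q^\Delta-\Lambda)$ correctly spell out. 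The differences are purely bookkeeping (where the factor $\Lambda^{-1}$ and the scalar powers of $q$ are absorbed), not substance.
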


\begin{proof}
The proof is parallel to the case of the previous lemma. 
By construction, $\bar{L}^{-1}$ is a matrix of the form 
\beqnn
\begin{aligned}
  \bar{L}^{-1} 
  = q^{\Delta^2/2}Q^\Delta\Gamma_{+}(Qq^{-\rho})
    \Gamma'_{+}(q^{-\rho})q^{-\Delta^2/2}\Lambda^{-1}q^{\Delta^2/2}\\
  \mbox{}\times\Gamma'_{+}(q^{-\rho})^{-1}\Gamma_{+}(Qq^{-\rho})^{-1}
  Q^{-\Delta}q^{-\Delta^2/2}. 
\end{aligned}
\eeqnn
With the aid of (\ref{qDelta^2/2-Lambda-rel}) and 
(\ref{qDelta-Lambda-rel}), one can derive the equality 
\beqnn
\begin{gathered}
  \Gamma_{+}(Qq^{-\rho})\Gamma'_{+}(q^{-\rho})
  q^{-\Delta^2/2}\Lambda^{-1}q^{\Delta^2/2}
  \Gamma'_{+}(q^{-\rho})^{-1}\Gamma_{+}(Qq^{-\rho})^{-1} \\
  = \Lambda^{-1}(1 + q^{1/2}\Lambda)
    (1 - Qq^{1/2}\Lambda)^{-1}q^{-\Delta-1/2}. 
\end{gathered}
\eeqnn
This implies that 
\beqnn
  \bar{L}^{-1} 
  = q^{\Delta^2/2}Q^\Delta (1 + q^{-1/2}\Lambda^{-1})
    (1 - Qq^{1/2}\Lambda)^{-1}q^{-\Delta}Q^{-\Delta}q^{-\Delta^2/2}.  
\eeqnn
It is easy to derive the final form (\ref{Lbar-factorized}) 
of $\bar{L}^{-1}$ from this expression. 
\end{proof}

\subsection{Identification of solution}

Translated to the language of difference operators, 
the foregoing results (\ref{L-factorized}) 
and (\ref{Lbar-factorized}) show that 
the initial values of the Lax operators 
at $\bst = \bar{\bst} = \bszero$ are factorized as 
\beq
  L = BC^{-1} = \tilde{C}^{-1}\tilde{B}, \quad 
  \bar{L}^{-1} = - CB^{-1} = - \tilde{B}^{-1}\tilde{C} 
\label{AL-LLbar-minus}
\eeq
with factors of the special form 
\beq
  B = \tilde{B} = e^{\rd_s} - q^s, \quad 
  C = 1 + Qq^{s-1}e^{-\rd_s},\quad
  \tilde{C} = 1 + Qq^s e^{-\rd_s}.
\label{BC-initial}
\eeq
Note that this expression is slightly different 
from (\ref{AL-LLbar}) and (\ref{AL-LLbar-tilde}). 
Namely, $\bar{L}^{-1}$ has an extra negative sign.
This does not affect the essential part 
of the reduction procedure in the case of (\ref{AL-LLbar}).   
The outcome of this procedure is the ``twisted''version 
\beq
  \frac{\rd b}{\rd t_k} = f_k, \quad 
  \frac{\rd c}{\rd t_k} = g_k, \quad 
  (-1)^k\frac{\rd b}{\rd\bar{t}_k} = \bar{f}_k, \quad 
  (-1)^k\frac{\rd c}{\rd\bar{t}_k} = \bar{g}_k 
\label{bc-evolutioneq}
\eeq
of the evolution equations (\ref{bc-eeq}).  
The reduced system is the same Ablowitz-Ladik hierarchy 
except that the second set of time variables $\bar{\bst}$ 
are replaced by $\iota(\bar{\bst}) 
= (-t_1,t_2,-t_3,\ldots)$.  

Thus we can conclude that the factorized Lax operators 
(\ref{AL-LLbar-minus}) with the initial data (\ref{BC-initial}) 
at $\bst=\bar{\bst}=\bszero$ persist to be factorized 
after time evolutions.  In summary, we have proven 
the following proposition: 

\begin{theorem}
The factorization problem (\ref{MFP}) 
for the matrix (\ref{crystal-U'}) yields 
a solution of the 2D Toda hierarchy 
with the tau function (\ref{crystal-tau'}). 
The associated Lax operators $L$ and $\bar{L}$ 
are factorized in the form of (\ref{AL-LLbar-minus}). 
In particular, this is a solution 
of the Ablowitz-Ladik hierarchy with 
the second set of time variables $\bar{\bst}$ 
being replaced by $\iota(\bar{\bst})$.  
\end{theorem}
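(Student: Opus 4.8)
The plan is to assemble the constituent results already established and to isolate the one genuinely structural point. First I would invoke Theorem~2: the deformed partition function $Z'(s,\bst,\bar{\bst})$ equals, up to the explicit exponential prefactor and the substitution $(\bst,\bar{\bst}) \mapsto (\iota(\bst),-\bar{\bst})$, the tau function of (\ref{crystal-tau'}) attached to the operator $g'$. Under the Bogoliubov correspondence whose infinitesimal form is (\ref{Xhat-def}), $g'$ is encoded by the $\ZZ\times\ZZ$ matrix $U'$ of (\ref{crystal-U'}); hence, by the general theory of the factorization problem, solving (\ref{MFP}) for $U'$ produces a genuine solution of the 2D Toda hierarchy whose tau function is precisely (\ref{crystal-tau'}). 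This disposes of the first assertion.

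Next I would pin down the Lax operators at the initial time $\bst = \bar{\bst} = \bszero$. The decisive simplification is that the central factor $Q^\Delta$ in (\ref{crystal-U'}) can be commuted through the vertex-operator matrices by means of $Q^\Delta\Lambda^n Q^{-\Delta} = Q^{-n}\Lambda^n$, which rearranges $U'$ into a product of one lower-triangular block and one upper-triangular block; normalizing the diagonal with $q^{-\Delta^2/2}$ gives the explicit $W,\bar{W}$ of Lemma~1. Forming $L = W\Lambda W^{-1}$ and $\bar{L}^{-1} = \bar{W}\Lambda^{-1}\bar{W}^{-1}$ and simplifying with the $k=0$ shift symmetry (\ref{qDelta^2/2-Lambda-rel}) together with the twisted commutation relation (\ref{qDelta-Lambda-rel}) collapses the infinite products to finite ones and yields the factorized forms (\ref{L-factorized}) and (\ref{Lbar-factorized}). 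Translated to difference operators these are exactly (\ref{AL-LLbar-minus}) with the first-order factors (\ref{BC-initial}). I take Lemmas~1--3 as established, so these initial values are in hand.

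The remaining and central step is to show that the quotient form (\ref{AL-LLbar-minus}) persists under every flow of the 2D Toda hierarchy. Following the rational-reduction mechanism, I would substitute the ansatz $L = BC^{-1}$, $\bar{L}^{-1} = -CB^{-1}$ (and the tilded version) into (\ref{2DToda-Laxeq}) and verify that the resulting equations for $B,C$ close up. The hard part will be the order count: one must check that the right-hand sides of the reduced Lax equations are difference operators of the restricted shape $f_k$ and $g_k e^{-\rd_s}$ (and their barred analogues), since otherwise the flow would leave the two-function family $(b,c)$ and destroy the ansatz. As in the derivation of (\ref{bc-eeq}), I would obtain this by writing each flow in its two equivalent forms---via the $(\cdot)_{\ge 0}$ and the $(\cdot)_{<0}$ projections---and comparing leading and trailing orders; their difference forces the right-hand sides into the allowed form, yielding the evolution equations (\ref{bc-evolutioneq}).

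Finally I would account for the sign. The sole departure from (\ref{BC-Laxeq}) and (\ref{BCtilde-Laxeq}) is the extra minus sign in $\bar{L}^{-1} = -CB^{-1}$, which enters the $\bar{t}_k$-flows through a factor $(-1)^k$. Since $(-1)^k\,\rd/\rd\bar{t}_k$ is the derivative along the $k$-th component of $\iota(\bar{\bst})$, the reduced system coincides with the Ablowitz-Ladik hierarchy in which $\bar{\bst}$ is replaced by $\iota(\bar{\bst})$. Combining this with the initial data of the second step completes the proof.
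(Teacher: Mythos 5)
Your proposal is correct and follows essentially the same route as the paper: invoke Theorem~2 and the Bogoliubov correspondence to tie the factorization problem for $U'$ to the tau function (\ref{crystal-tau'}), use Lemmas~1--3 to obtain the factorized initial values (\ref{AL-LLbar-minus})--(\ref{BC-initial}), and then run the rational-reduction consistency argument (the two-projection order count giving the twisted Lax equations with $(-1)^k$ factors) to conclude the factorized form persists and the $\bar{\bst}$-flows are those of the Ablowitz--Ladik hierarchy in the variables $\iota(\bar{\bst})$. Your treatment of the sign twist, with $(-1)^k\,\rd/\rd\bar{t}_k$ identified as differentiation along $\iota(\bar{\bst})$, is exactly the paper's reasoning.
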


\begin{remark}
Unlike the reduction condition (\ref{AL-LLbar}) 
of Brini et al. \cite{Brini10,BCR11}, 
our reduction condition (\ref{AL-LLbar-minus}) 
has extra negative signs.  This apparent discrepancy 
seems to be related to the fact that the usual expression 
\cite{AKMV03,CGMPS06} of the string amplitude $Z_{X_1}$ 
of the resolved conifold contains the parameter $Q$ 
with with a negative sign: 
\beq
  Z_{X_1} = \sum_{\lambda\in\calP}
    s_\lambda(q^{-\rho})s_{\tp{\lambda}}(q^{-\rho})(-Q)^{|\lambda|}. 
\eeq
\end{remark} 

\begin{remark}
The factorization problem (\ref{MFP}) for the matrix 
(\ref{crystal-U}), too, can be treated in the same way.  
The solution at $\bst = \bar{\bst} = \bszero$ reads 
\beq
\begin{gathered}
  W = q^{\Delta^2/2}\Gamma_{-}(Qq^{-\rho})^{-1}
      \Gamma_{-}(q^{-\rho})^{-1}q^{-\Delta^2/2},\\
  \bar{W} = q^{\Delta^2/2}Q^\Delta\Gamma_{+}(Qq^{-\rho})
      \Gamma_{+}(q^{-\rho})q^{\Delta^2/2}.
\end{gathered}
\eeq
The initial values of the associated Lax operators 
at $\bst = \bar{\bst} = \bszero$ 
have the factorized form 
\beq
  L = \bar{L}^{-1} 
  = e^{\rd_s}(1 - q^{s-1}e^{-\rd_s})(1 - Qq^{s-1}e^{-\rd_s}). 
\eeq
Such a factorized form is preserved by the flows 
of the 2D Toda hierarchy.  
\end{remark}

\subsection*{Acknowledgements}

This work is partly supported by JSPS Grants-in-Aid 
for Scientific Research No. 21540218 and No. 22540186 
from the Japan Society for the Promotion of Science.

\end{document}